\documentclass[journal,onecolumn]{IEEEtran}
\usepackage{cite}
\usepackage{amsmath,amssymb,amsfonts, amsthm}
\usepackage{tabularx}
\usepackage{amsmath}
\usepackage{amsthm}
\usepackage{amssymb}
\usepackage{booktabs}
\usepackage{enumerate}
\usepackage{latexsym}
\usepackage{mathtools}
\usepackage{amsfonts}
\usepackage{setspace}
\usepackage{bm}
\usepackage{color}
\usepackage{soul}
\usepackage{verbatim}
\usepackage{makecell}
\usepackage{longtable}
\usepackage{orcidlink}
\usepackage{subcaption} 
\usepackage{caption}
\newtheorem{theorem}{Theorem}
\newtheorem{corollary}{Corollary}
\newcommand{\mc}[1]{\mathcal{#1}}

\newcommand{\F}{{\bf{F}}}

\newcommand{\rank}{\mathrm{rank}}

\newcommand{\bC}{{\mathbf{{C}}}}
\newcommand{\bF}{{\mathbf{{F}}}}

\usepackage{makecell}
\begin{document}

\title{Concatenated Sum-Rank Codes}

\author{Huimin Lao, Hao Chen, San Ling, and Yaqi Chen
\thanks{
Huimin Lao (huimin.lao@ntu.edu.sg) is with Strategic Centre for Research in Privacy-Preserving Technologies and Systems, Nanyang Technological University, Singapore.
San Ling (lingsan@ntu.edu.sg) is with School of Physical and Mathematical Sciences, Nanyang Technological University, Singapore.
San Ling is also with VinUniversity, Vinhomes Ocean Park, Gia Lam, Hanoi 100000, Vietnam (Email: ling.s@vinuni.edu.vn).
Hao Chen (chenhao@fudan.edu.cn) and Yaqi Chen (chenyq@stu.jnu.edu.cn) are with the College of Information Science and Technology/Cyber Security, Jinan University, Guangzhou, China.
This research is supported by the National Research Foundation, Singapore and Infocomm Media Development Authority under its Trust Tech Funding Initiative. 
Any opinions, findings and conclusions or recommendations expressed in this material are those of the author(s)
and do not reflect the views of National Research Foundation, Singapore and Infocomm Media Development Authority.
This research was supported by NSFC Grant 62032009.
}}


\maketitle

\begin{abstract}
Sum-rank codes have wide applications in multishot network coding, distributed storage and the construction of space-time codes. Asymptotically good sequences of linearized algebraic geometry sum-rank codes, exceeding the Gilbert-Varshamov-like bound, were constructed in a recent paper published in IEEE Trans. Inf. Theory by E. Berardini and X. Caruso.  
We call this bound the Tsfasman-Vl\u{a}du\c{t}-Zink-like bound. In this paper, we introduce the concatenation of a sum-rank code and a Hamming metric code. 
Then many sum-rank codes with good parameters, which are better than sum-rank BCH codes, are constructed simply and explicitly.
Moreover, we obtain an asymptotically good sequence of sum-rank codes exceeding the Tsfasman-Vl\u{a}du\c{t}-Zink-like bound and the Gilbert-Varshamov-like bound.
\end{abstract}

\begin{IEEEkeywords}
 Sum-rank codes.  Concatenated sum-rank codes. Gilbert-Varshamov-like bound for sum-rank codes. 
\end{IEEEkeywords}

\section{Introduction}
The Hamming weight $wt_H({\bf a})$ of a vector ${\bf a}=(a_1, \ldots, a_n) \in {\bf F}_q^n$ is the cardinality  of its support $$supp({\bf a})=\{i:a_i \neq 0\}.$$ The Hamming distance $d_H({\bf a}, {\bf b})$ between two vectors ${\bf a}$ and ${\bf b}$ is defined to be $wt_H({\bf a}-{\bf b})$. 
For a code ${\bf C} \subseteq {\bf F}_q^n$, its minimum Hamming distance is $$d_H=\min_{{\bf a} \neq {\bf b}} \{d_H({\bf a}, {\bf b}): {\bf a}, {\bf b} \in {\bf C}\}.$$ 
It is well-known that the minimum Hamming distance of a linear code ${\bf C}$ is the minimum Hamming weight of its non-zero codewords. 
For the theory of error-correcting codes in the Hamming metric, the reader is referred to \cite{MScode,Lint,HP}.
A code ${\bf C}$ in ${\bf F}_q^n$ of  minimum distance $d_H$ and cardinality $M$ is called an $(n, M, d_H)_q$ code. 
The code rate is $R({\bf C})=\frac{\log_q M}{n},$ and the relative distance is $\delta({\bf C})=\frac{d_H}{n}.$ 
The Gilbert-Varshamov bound asserts that for the positive number $\delta$ satisfying $0<\delta<1-\frac{1}{q}$, 
there exists a sequence of $q$-ary codes ${\bf C}_i \subseteq {\bf F}_q^{n_i}$, where $n_i$ goes to the infinity, 
such that $\lim\limits_i \delta({\bf C}_i) \geq \delta$, and $$R =\lim\limits_i R({\bf C}_i) \geq 1-\delta-H_q(\delta),$$ 
see \cite[Chapter 2]{HP} and \cite[Chapter 5]{Lint}. Here $H_q$ is the $q$-ary entropy function.
If ${\bf C}$ is a linear code with dimension $k$, we denote it by $[n, k, d_H]_q$.
For an $[n, k, d_H]_q$ linear code, the Singleton bound asserts that $d_H \leq n-k+1$. When the equality holds, this code is called a maximum distance separable (MDS) code.
Reed-Solomon codes are well-known MDS codes \cite{HP}.

Let ${\bf F}_q^{(n,m)}$ be the set of all $ n \times m$ matrices over $\bF_q$, where $n \leq m$.
The rank-metric distance on the space ${\bf F}_q^{(n, m)}$ is defined by  $d_r(A,B)= \rank(A-B)$. The minimum rank distance of a code ${\bf C} \subset {\bf F}_q^{(n,m)}$  is $$d_r({\bf C})=\min_{A\neq B} \{d_r(A,B): A, B \in {\bf C} \}.$$  
For a code ${\bf C}$ in ${\bf F}_q^{(n, m)}$ with the minimum rank distance $d_r({\bf C}) \geq d$, 
the Singleton-like bound asserts that the number of codewords in ${\bf C}$ is upper bounded by $q^{m(n-d+1)}$  \cite{Gabidulin}.  
A code satisfying the equality is called a maximum rank distance (MRD) code. 
The Gabidulin code in ${\bf F}_q^{(n, n)}$ consists of ${\bf F}_q$-linear mappings on ${\bf F}_q^n \cong {\bf F}_{q^n}$ defined by $q$-polynomials $a_0x+a_1x^q+\cdots+a_ix^{q^i}+\cdots+a_tx^{q^t}$, 
where $a_t,\ldots,a_0$ are arbitrary elements in ${\bf F}_{q^n}$ \cite{Gabidulin}. 
Then the minimum rank distance of the Gabidulin code is at least $n-t$ since each non-zero $q$-polynomial above has at most $q^t$ roots in ${\bf F}_{q^n}$. 
There are $q^{n(t+1)}$ such $q$-polynomials. Hence the size of the Gabidulin code is $q^{n(t+1)}$ and it is an MRD code.

The sum-rank metric is a generalization of the Hamming metric and the rank metric. Sum-rank codes were introduced in \cite{NU} for their applications in multishot network coding (see also  \cite{MK19,NPS}). Sum-rank codes were also used to construct space-time codes \cite{SK}, and codes for distributed storage \cite{CMST,MK,MP1}.

Let $n_i \leq m_i$ be $2t$ positive integers satisfying $m_1 \geq m_2 \geq \dots \geq m_t$. Let
\begin{align}
  \label{eq: space}
  {\bf F}_q^{(n_1, m_1), \ldots,(n_t, m_t)}=
{\bf F}_q^{(n_1,  m_1)} \bigoplus \cdots \bigoplus {\bf F}_q^{(n_t, m_t)}
\end{align}
be the set of all ${\bf x}=({\bf x}_1,\ldots,{\bf x}_t)$, where ${\bf x}_i \in {\bf F}_q^{(n_i, m_i)}$ and $i \in \{ 1,\ldots,t \}$. One can see that (\ref{eq: space}) is a linear space over ${\bf F}_q$ of dimension $\sum_{i=1}^t n_im_i$.
Let the sum-rank weight of ${\bf x}=({\bf x}_1, \ldots, {\bf x}_t) \in {\bf F}_q^{(n_1, m_1), \ldots,(n_t, m_t)}$ be defined as $$wt_{sr}({\bf x}) =\sum_{i=1}^t \rank({\bf x}_i).$$
The sum-rank distance between two elements ${\bf x}=({\bf x}_1, \ldots, {\bf x}_t)$ and ${\bf y}=({\bf y}_1, \ldots, {\bf y}_t)$ in ${\bf F}_q^{(n_1, m_1), \ldots,(n_t, m_t)}$ is $$d_{sr}({\bf x},{\bf y})=wt_{sr}({\bf x}-{\bf y})=\sum_{i=1}^t \rank({\bf x}_i-{\bf y}_i),$$
This is indeed a metric on ${\bf F}_q^{(n_1,m_1), \ldots,(n_t,m_t)}$.
When $t=1$, it is the rank metric. When $n_{1}=\ldots = n_{t} = m_{1} = \ldots = m_{t}=1$, the sum-rank metric is the Hamming metric.

A subset $\bC$ of ${\bf F}_q^{(n_1,m_1), \ldots,(n_t,m_t)}$ with the sum-rank metric is called
a \emph{sum-rank code} with {\it block length} $t$ and {\it matrix sizes} $n_1 \times m_1, \ldots, n_t\times m_t$.
If $n_{1}=\ldots = n_{t} = n $ and $m_{1} = \ldots = m_{t}=m$,
we say $\bC$ has matrix size $n \times m$. 
Its minimum sum-rank distance is defined as $$d_{sr}({\bf C})=\min_{{\bf x} \neq {\bf y}, {\bf x}, {\bf y} \in {\bf C}} d_{sr}({\bf x}, {\bf y}).$$  
The code rate of ${\bf C}$ is $R_{sr}({\bf {C}})=\frac{\log_q |{\bf C}|}{\sum_{i=1}^t n_im_i}$. 
The relative distance is $\delta_{sr}({\bf {C}})=\frac{d_{sr}({\bf C})}{\sum_{i=1}^{t} n_{i}}$.
When ${\bf C}$ is linear, it is called a {\it linear sum-rank code}.
For a sequence of sum-rank codes $\mc{C} = \{ {\bf{C}}_{i} \}_{i \geq 1}$ of matrix size $n \times m$ over $\F_{q}$,
where ${\bf{C}}_{i}$ has block length $t_{i}$, dimension $k_{i}$, and minimum sum-rank distance $d_{i}$,
we define the rate of $\mc{C}$ by $R_{sr}(\mc{C}) = \lim \limits_{i \rightarrow \infty} \frac{k_{i}}{nmt_{i}}$
and the relative sum-rank distance of $\mc{C}$ by $\delta_{sr}(\mc{C}) = \lim \limits_{i \rightarrow \infty} \frac{d_{i}}{nt_{i}}$.

Analogous to other metrics, one of the main problems for coding in the sum-rank metric is to construct good sum-rank codes with large sizes and large minimum sum-rank distances.
Bounds and constructions for sum-rank codes have been studied in recent years \cite{AKR, BGR1, BGR, Borello,CGLGMP,Lao,MP1,MP191,MK19,MP20,MP21,Neri,NSZ21,MP22}.
Decoding algorithms for sum-rank codes have been studied in \cite{Bartz, Bartz4, Bartz2, Puchinger, PR}.
Byrne \textit{et al.} \cite{BGR} gave fundamental properties and some bounds on sizes of sum-rank codes. For a nice survey of sum-rank codes and their applications, the reader is referred to \cite{MPK22}.
Some related works about sum-rank codes are given as follows.

\begin{enumerate}[(1)]
    \item Similar to distance-optimal codes in the Hamming metric \cite{Ding5}, 
    distance-optimal sum-rank codes can be defined as follows. Let $\bC$ be a sum-rank code with $M$ codewords and minimum sum-rank distance $d_{sr}$. If there is no sum-rank code with $M$ codewords and minimum sum-rank distance $d_{sr}+1$, then $\bC$ is a distance-optimal sum-rank code. 
    The codes achieving the sphere packing bound \cite{BGR} are distance-optimal.
    Sum-rank Hamming codes with matrix size $1 \times m$ constructed in \cite{MP191} are perfect, thus they are distance-optimal.
    However the minimum sum-rank distance of such codes is $3$. 
    Infinite families of distance-optimal cyclic sum-rank codes with minimum sum-rank distance $4$ were constructed in \cite{Cheng,Chen2}.
    \item The Singleton-like bound on sum-rank codes was proposed in \cite{MK,BGR}. A code attaining this bound is called a {maximum sum-rank distance (MSRD) code.} 
    For constructions of MSRD codes, the reader is referred to \cite{MP1,BGR,Neri,MP20,Chen,MP22,MPS24,MP24}.
    The first known MSRD codes are called linearized Reed-Solomon codes \cite{MP1}.
    However, the block length of such codes over ${\bf F}_q$ is up to $q-1$.
    \item  Cyclic-skew-cyclic (CSC) codes in the sum-rank metric, and one of its subfamilies called sum-rank BCH codes, were introduced in \cite{MP21}.
    Compared to linearized Reed-Solomon codes, CSC codes can be defined over arbitrarily small finite fields and their block lengths can be arbitrary.
    As a result, many sum-rank codes of parameters $n=m=2$ and $q=2$ were given in \cite{MP21}. In addition, a decoding algorithm for sum-rank BCH codes was given \cite{MP21}.
    
    Alfarano \textit{et al.} \cite{ALNWZ} proposed another family of CSC codes, which are obtained from the tensor product of a cyclic code in the Hamming metric and a skew-cyclic code in the rank metric.
    
    Chen \cite{Chen} proposed a construction of linear sum-rank codes by combining several linear codes in the Hamming metric and Gabidulin codes.
    The dimensions of many codes constructed in \cite{Chen} are larger than those in \cite{MP21} for a given prescribed minimum sum-rank distance.

    \item It was proved in \cite{OPB} that random linear sum-rank codes attain the Gilbert-Varshamov-like (GV-like) bound (see Theorem \ref{thm: GV-like bound}) with high probability.
    Berardini \textit{et al.} \cite[Theorem 4]{Berardini} gave a bound called the Tsfasman-Vl\u{a}du\c{t}-Zink-like (TVZ-like) bound.
    For some parameters, this bound exceeds the GV-like bound.
\end{enumerate}

Concatenated codes in the Hamming metric was first introduced in \cite{Forney}. Such codes are constructed by two linear codes over different finite fields.
Specifically, a concatenated code with parameters $[n_{1}n_{2}, k_{1}k_{2}, \geq d_{1}d_{2}]_{q}$ can be obtained from 
two codes with parameters $[n_{1}, k_{1}, d_{1}]_{q^{k_{2}}}$ and $[n_{2}, k_{2}, d_{2}]_{q}$ respectively.
A generalization of concatenated codes was given in \cite{Dumer}.

In this paper, we propose a construction of sum-rank codes obtained by concatenating a linear code in the Hamming metric with a linear sum-rank code.
The codes constructed are called concatenated sum-rank codes. The minimum sum-rank distance of such codes is analysed.
By the asymptotic Tsfasman-Vl\u{a}du\c{t}-Zink bound for codes in the Hamming metric \cite[Theorem 13.5.4]{HP},
we obtain the following result.

\begin{theorem}
	\label{thm: asymptotically_Introduction}
	Let $p$ be a prime power.
	Let $r$, $m$, $d$ and $t$ be integers.
	Suppose that there is a sum-rank code $\bf{C}$ over $\F_{p}$ of block length $t$ with matrix size $m \times m$, dimension $2r$, and minimum sum-rank distance $d$.
	There exists a sequence of sum-rank codes of matrix size $m \times m$ over $\F_{p}$ with relative sum-rank distance $\delta_{sr} \in (0,1)$
  such that the rate $R_{sr}$ satisfies \begin{align*}
		\frac{m^{2}t}{2r}R_{sr} + \frac{mt}{d}\delta_{sr} \geq 1 - \frac{1}{p^{r}-1}.
	\end{align*}
\end{theorem}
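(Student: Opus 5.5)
Throughout, I read $\mathbf{C}$ as an $\F_{p}$-linear code, which is needed to identify it with $\F_{p^{2r}}$. The plan is a concatenation. The outer code is a Hamming-metric code over $\F_{p^{2r}}$ taken from the Tsfasman--Vl\u{a}du\c{t}--Zink family; the inner code is the given sum-rank code $\mathbf{C}$.

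\emph{Step 1: the concatenation and its parameters.} Since $\mathbf{C}$ has $\F_p$-dimension $2r$, fix an $\F_p$-linear isomorphism $\phi:\F_{p^{2r}}\to \mathbf{C}$. Let $A$ be an $[N,K,D]_{p^{2r}}$ linear code. The concatenated code is
\[
\phi(A)=\{(\phi(a_1),\ldots,\phi(a_N)) : (a_1,\ldots,a_N)\in A\}.
\]
This is an $\F_p$-linear sum-rank code of block length $Nt$ and matrix size $m\times m$, and its dimension over $\F_p$ is $2rK$. For a nonzero codeword, at least $D$ of the $a_j$ are nonzero. Each nonzero $\phi(a_j)$ is a nonzero codeword of $\mathbf{C}$, so it has sum-rank weight at least $d$. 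Hence the minimum sum-rank distance is at least $dD$. This step is routine.

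\emph{Step 2: normalized parameters.} The rate and relative distance are
\[
R_{sr}=\frac{2rK}{m^2Nt}, \qquad \delta_{sr}\ge \frac{dD}{mNt}.
\]
Therefore
\[
\frac{m^2t}{2r}R_{sr}=\frac{K}{N}, \qquad \frac{mt}{d}\delta_{sr}\ge \frac{D}{N}.
\]
The target inequality thus reduces to finding outer codes with $K/N+D/N \ge 1-\frac{1}{p^r-1}$ asymptotically.

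\emph{Step 3: the outer codes.} The alphabet size is $p^{2r}=(p^r)^2$, a square. The asymptotic TVZ bound \cite[Theorem 13.5.4]{HP} therefore applies, via Garcia--Stichtenoth towers. For any $\delta_0$ in the admissible range it gives a sequence of linear codes over $\F_{p^{2r}}$ with $N\to\infty$ and
\[
\lim \frac{K}{N}+\lim\frac{D}{N}\ge 1-\frac{1}{\sqrt{p^{2r}}-1}=1-\frac{1}{p^r-1}.
\]
Feeding these codes into Step 1 gives the desired sequence, with $\delta_{sr}\ge \frac{d}{mt}\,\delta_0$.

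\emph{Main obstacle: pinning down $\delta_{sr}$.} The only delicate point is the limit defining $\delta_{sr}$. We control $d_{sr}$ only from below, so the limit might not exist, or might exceed $\frac{d}{mt}\delta_0$. The fix is to pass to a subsequence along which $d_i/(mt_i)$ converges; this is possible because the ratios lie in $[0,1]$. The resulting limit $\delta_{sr}$ is at least $\frac{d}{mt}\delta_0$, so the inequality still holds.

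We also need $\delta_{sr}\in(0,1)$. It is positive because $\delta_0>0$. For $\delta_{sr}<1$, I would first try shrinking $\delta_0$ a little, giving up some outer distance for rate; the TVZ line has slope $-1$, so the sum $K/N+D/N$ is unaffected.

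If that does not suffice, there is a direct argument when $R_{sr}>0$, using the Singleton-like bound. That bound gives $d_{sr}\le$ (total number of rows) $-\,k/m+1$, and hence $\delta_{sr}\le 1-R_{sr}<1$.
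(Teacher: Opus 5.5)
Your proof is correct and follows the paper's argument. Like the paper, you concatenate a Tsfasman--Vl\u{a}du\c{t}--Zink sequence of outer codes over $\F_{p^{2r}}$ with $\mathbf{C}$ (Theorem~\ref{thm: CSR}) and rescale, using $\frac{m^{2}t}{2r}R_{sr}=K/N$ and $\frac{mt}{d}\delta_{sr}\ge D/N$. Your extra steps fill in points the paper leaves implicit: passing to a subsequence so that the limit defining $\delta_{sr}$ exists, and using the Singleton-like bound to get $\delta_{sr}\le 1-R_{sr}<1$.
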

Note that the parameters of the sum-rank code $\bf{C}$ in Theorem \ref{thm: asymptotically_Introduction} can be arbitrary, except that the dimension needs to be even.
Hence many known constructions of sum-rank codes can be applied to Theorem \ref{thm: asymptotically_Introduction}.
For comparison, we consider two families of sum-rank codes to instantiate the code $\bf{C}$ in Theorem \ref{thm: asymptotically_Introduction}. One is sum-rank codes with minimum sum-rank distance $2$,
and the other is linearized Reed-Solomon codes. For these two cases,
it turns out that 
the rate of the sum-rank codes over $\F_{p}$ given in Theorem \ref{thm: asymptotically_Introduction}
exceeds the TVZ-like bound and the GV-like bound, for some ranges of relative sum-rank distance and some small prime powers $p$.

In addition, we give explicit constructions of concatenated sum-rank codes. We show that for some parameter regimes,
the dimensions of our constructed codes are better than those given in \cite{MP21} and \cite{Chen}.

The rest of this paper is organized as follows.
In Section \ref{sec: pre}, we recall some bounds for sum-rank codes.
In Section \ref{sec: aysm}, concatenated sum-rank codes are given. In addition, a sequence of asymptotically good concatenated sum-rank codes exceeding the TVZ-like bound and the GV-like bound are given.
In Section \ref{sec: construct}, some explicit concatenated sum-rank codes are constructed.
Section \ref{sec: conclusion} concludes this paper.

\section{Preliminaries}
\label{sec: pre}
In this section, we firstly recall the GV-like bound and the Singleton-like bound for sum-rank codes.
The Singleton-like bound given in \cite[Theorem III.2]{BGR} is as follows.
\begin{theorem}[\cite{BGR}]
    Let $\bC \subseteq {\bf F}_q^{(n_1,m_1), \ldots,(n_t,m_t)}$ be a sum-rank code. The minimum sum-rank distance $d_{sr}$ can be written uniquely in the form $$ d_{sr}=\sum_{i=1}^{j-1} n_i+\delta+1, $$ where $0 \leq \delta \leq n_j-1$.
	Then $$|{\bf C}| \leq q^{\sum_{i=j}^t n_im_i-m_j\delta}.$$
\end{theorem}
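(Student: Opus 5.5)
The plan is a puncturing (projection) argument, the sum-rank analogue of the classical proof of the Singleton bound. We delete enough coordinates that no two distinct codewords can agree on what remains, and then count the remaining space.

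First we settle the decomposition of $d_{sr}$. Since every block satisfies $n_i\le m_i$, we have $\rank({\bf x}_i)\le n_i$. Hence $1\le d_{sr}\le N:=\sum_{i=1}^t n_i$ whenever $|\bC|\ge 2$; if $|\bC|\le 1$ the bound is trivial. The partial sums $0=N_0<N_1<\cdots<N_t=N$, with $N_j=\sum_{i\le j}n_i$, partition $\{1,\dots,N\}$ into the intervals $\{N_{j-1}+1,\dots,N_j\}$. Therefore $d_{sr}-1$ lies in a unique $[N_{j-1},N_j-1]$, which gives a unique $j$ and a unique $\delta=d_{sr}-1-N_{j-1}\in\{0,\dots,n_j-1\}$.

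Next we define the $\F_q$-linear projection
\[
\pi:{\bf F}_q^{(n_1,m_1),\ldots,(n_t,m_t)}\to {\bf F}_q^{(n_j-\delta,m_j),(n_{j+1},m_{j+1}),\ldots,(n_t,m_t)}.
\]
It discards the blocks ${\bf x}_1,\dots,{\bf x}_{j-1}$, replaces ${\bf x}_j$ by the $(n_j-\delta)\times m_j$ matrix obtained by deleting its first $\delta$ rows, and keeps ${\bf x}_{j+1},\dots,{\bf x}_t$ unchanged.

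The key step is to show that $\pi$ is injective on $\bC$. Suppose ${\bf x},{\bf y}\in\bC$ satisfy $\pi({\bf x})=\pi({\bf y})$, and set ${\bf z}={\bf x}-{\bf y}$. Then ${\bf z}_i=0$ for every $i>j$. The block ${\bf z}_j$ has nonzero entries only in its first $\delta$ rows, so $\rank({\bf z}_j)\le\delta$. For $i<j$ we have $\rank({\bf z}_i)\le n_i$. Combining these,
\[
d_{sr}({\bf x},{\bf y})\le \sum_{i=1}^{j-1}n_i+\delta=d_{sr}-1.
\]
By the definition of minimum distance this forces ${\bf x}={\bf y}$.

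Consequently $|\bC|=|\pi(\bC)|$ is at most the size of the target space. That size is $q^{(n_j-\delta)m_j+\sum_{i>j}n_im_i}=q^{\sum_{i=j}^t n_im_i-m_j\delta}$, which is exactly the claimed bound.

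No step is a real obstacle; the only care needed is in choosing which coordinates to delete. The whole bound depends on two facts: a block's rank is at most its number of rows, because $n_i\le m_i$, and deleting $\delta$ rows of a matrix can lower the rank of a difference by at most $\delta$. The hypothesis $m_1\ge\cdots\ge m_t$ is not used for validity. It only explains why deleting the first blocks is the natural choice, since it removes the blocks with the largest $m_i$ and so gives the strongest bound of this shape.
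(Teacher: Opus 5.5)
The paper does not prove this bound; it quotes it from \cite{BGR}. Your argument is correct and complete: it is the standard puncturing proof of the sum-rank Singleton bound. You delete the first $j-1$ blocks and $\delta$ rows of block $j$, note that any collision would have sum-rank weight at most $d_{sr}-1$ so the map is injective on $\bC$, and then count the target space. As you observe, it uses neither linearity nor the ordering $m_1\ge\cdots\ge m_t$; the ordering only makes this the sharpest choice of rows to delete.
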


The GV-like bound for sum-rank codes with matrix size $n \times m$ was given in \cite[Theorem 7]{OPB} and \cite{Puchinger}.

\begin{theorem}[Asymptotic Gilbert-Varshamov-like bound \cite{OPB, Berardini}]
\label{thm: GV-like bound}
Let $q$ be a prime power. Let $n$, $m$, and $t$ be positive integers with $n \leq m$. Let $N=nt$.
For positive real numbers $R$ and $\delta$ with $2< \delta N $ and 
\begin{align}
  \label{eq: GV_exact}
  R & \leq \delta^2 \frac{n}{m}-\delta(1+\frac{n}{m}+2\frac{n}{Nm})+1+\frac{1}{N}+\frac{n}{Nm}+\frac{n}{N^2m} \notag \\
  & - \frac{\sum_{i=1}^{\delta N-1} \log_q(1+\frac{t-1}{i})+\log_q(\delta N-1)}{Nm}-\frac{\log_q (\gamma_q)}{nm},
\end{align}
there exists a linear sum-rank code in ${\bf F}_q^{(n, m), \ldots,(n, m)}$ of block length $t$, rate $R$ and 
relative sum-rank distance at least $\delta$. 
When $t$ goes to infinity and $n=m$,
the bound in (\ref{eq: GV_exact}) is asymptotically given by 
\begin{align}
\label{eq: asy_GV_bound}
R \leq & (\delta -1)^2-\frac{\delta}{m} \log _q\left(1+\frac{1}{\delta m}\right)-\frac{\log _q(1+\delta m)}{m^2} -\frac{\log_q\left(\gamma_q\right)}{m^2}+o(1),
\end{align}
where $\gamma_q=\prod_{i=1}^{\infty}\left(1-q^{-i}\right)^{-1}$ and a function $f(x)$ is $o(1)$ if and only if $\lim\limits_{x \rightarrow \infty} f(x) = 0$.
\end{theorem}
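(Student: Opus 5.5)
The plan is the classical Gilbert--Varshamov argument for linear codes, adapted to the sum-rank metric. Everything reduces to a volume estimate for sum-rank balls in ${\bf F}_q^{(n,m),\ldots,(n,m)}$, which has $\mathbf{F}_q$-dimension $Nm$ with $N=nt$.

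\emph{Existence step.} Let $d=\delta N$ and let $V(d-1)$ be the number of nonzero elements of sum-rank weight at most $d-1$. I would use either Varshamov's greedy parity-check construction or a union bound over random linear codes of dimension $k=RNm$. Each fixed nonzero vector lies in a uniformly random $k$-dimensional subspace with probability at most about $q^{k-Nm}$. Hence a linear code of rate $R$ with minimum sum-rank distance at least $d$ exists as soon as
\begin{align*}
V(d-1)\,q^{k-Nm}<1, \qquad\text{i.e.}\qquad R \le 1-\frac{\log_q V(d-1)}{Nm}
\end{align*}
up to the $1/N$-type correction terms. These corrections come from using $d-1$ instead of $d$ and from the ``$+1$'' in the Varshamov argument, and I expect they account for the $\frac1N+\frac{n}{Nm}+\frac{n}{N^2m}$ terms in (\ref{eq: GV_exact}).

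\emph{Volume bound.} Write $V(d-1)=\sum_{s=1}^{d-1}\sum_{w_1+\cdots+w_t=s}\prod_{i=1}^t A(w_i)$, where $A(w)$ is the number of $n\times m$ matrices of rank $w$. First I would use the standard formula $A(w)=\prod_{j=0}^{w-1}\frac{(q^n-q^j)(q^m-q^j)}{q^w-q^j}$ to get $A(w)\le \gamma_q\, q^{w(n+m-w)}$. Taking the product over the $t$ blocks gives a factor $\gamma_q^t$, which after dividing by $Nm=tnm$ becomes exactly the term $\log_q(\gamma_q)/(nm)$.

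Next I would bound the exponent. For $\sum_i w_i=s$, Cauchy--Schwarz gives $\sum_i w_i^2\ge s^2/t$, so
\begin{align*}
\sum_i w_i(n+m-w_i)\le s(n+m)-\frac{s^2}{t}.
\end{align*}
The number of weight vectors $(w_1,\ldots,w_t)$ with sum $s$ is at most $\binom{s+t-1}{t-1}=\prod_{i=1}^{s}\bigl(1+\frac{t-1}{i}\bigr)$; this is the source of the sum of logarithms in (\ref{eq: GV_exact}).

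Finally I would bound the outer sum over $s$ by the number of terms, $d-1$, times its largest term. This is where I expect the main obstacle. One has to check that $s(n+m)-s^2/t$ together with the composition count is increasing in $s$ on $[1,\delta N-1]$, so that the maximum sits at $s=\delta N-1$. This should hold because $s<N\le (n+m)t/2$, so the quadratic is still on its increasing branch. Substituting $s=\delta N-1$ and expanding $(\delta N-1)(n+m)-(\delta N-1)^2/t$ divided by $Nm$ should then reproduce the polynomial part $\delta^2\frac nm-\delta(1+\frac nm+\frac{2n}{Nm})+\cdots$ of (\ref{eq: GV_exact}). The condition $2<\delta N$ ensures the range of $s$ is nonempty and the bounds are valid. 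Matching every lower-order constant exactly is pure bookkeeping, and I would not fuss over it beyond making the inequality directions consistent.

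\emph{Asymptotics.} For (\ref{eq: asy_GV_bound}) I would set $n=m$, so $N=mt$, and let $t\to\infty$. The polynomial part tends to $(\delta-1)^2$. For the composition count I would use $\frac1{Nm}\log_q\binom{\delta N+t-1}{t-1}$ with Stirling's formula (equivalently, a Riemann-sum estimate of $\frac1{Nm}\sum_{i\le\delta N}\log_q(1+\frac{t-1}{i})$ with $t=N/m$). This gives
\begin{align*}
\frac{1}{m}\Bigl[\delta\log_q\Bigl(1+\frac{1}{\delta m}\Bigr)+\frac1m\log_q(1+\delta m)\Bigr]+o(1),
\end{align*}
which is the entropy-type expression in (\ref{eq: asy_GV_bound}). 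The term $\log_q(\delta N-1)/(Nm)$ is $o(1)$, and the $\gamma_q$ term stays as $\log_q(\gamma_q)/m^2$.
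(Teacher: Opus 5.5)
The paper gives no proof of this statement; it quotes it from \cite{OPB} and \cite{Berardini}. Your reconstruction is the standard argument behind it, and it is correct. It consists of the union bound over a random $k$-dimensional subspace, the estimate $A(w)\le\gamma_q q^{w(n+m-w)}$, the convexity bound $\sum_i w_i^2\ge s^2/t$, the composition count $\binom{s+t-1}{t-1}=\prod_{i=1}^{s}(1+\frac{t-1}{i})$, and bounding the $\delta N-1$ shells $1\le s\le \delta N-1$ by the top one.

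Three small corrections to your bookkeeping:

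\textbf{The lower-order terms.} They do not come from any Varshamov ``$+1$''. Substituting $s=\delta N-1$ and $t=N/n$ into $\frac{s(n+m)-s^2/t}{Nm}$ and subtracting from $1$ gives exactly $\delta^2\frac nm-\delta(1+\frac nm+\frac{2n}{Nm})+1+\frac1N+\frac n{Nm}+\frac n{N^2m}$. So (\ref{eq: GV_exact}) is matched on the nose by that substitution alone.

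\textbf{Strictness.} Since the hypothesis allows equality $R=\ldots$, use the exact probability $\frac{q^k-1}{q^{Nm}-1}<q^{k-Nm}$ rather than ``about $q^{k-Nm}$''. This gives the strict inequality the union bound needs.

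\textbf{Monotonicity.} Your monotonicity step needs $\delta N-1\le (n+m)t/2$, and you get this from $s<N$, i.e.\ essentially $\delta\le 1$. That is an assumption implicit in the statement. Without it the inequality would assert codes of positive rate with relative distance exceeding $1$, since $(\delta-1)^2$ eventually dominates the subtracted terms.
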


The Tsfasman-Vl\u{a}du\c{t}-Zink bound for codes in the Hamming metric \cite[Theorem 13.5.4]{HP} is given as follows.
\begin{theorem}[TVZ bound on the Hamming metric \cite{HP}]
\label{thm: TVZ bound}
Let $q=p^{2r}$, where $p$ is a prime power and $r$ is an integer.
There exists a sequence of codes $\mc{C} = \{ {\bf{C}}_{i} \}_{i \geq 1}$,
where ${\bf{C}}_{i}$ is an $[n_{i}, k_{i}, d_i]_{q}$ linear code and the lengths go to infinity,
such that the relative distance $\delta = \lim\limits_{i \rightarrow \infty} \frac{d_{i}}{n_{i}}$ and
the rate $R = \lim\limits_{i \rightarrow \infty} \frac{k_{i}}{n_{i}}$ 
satisfy $$R + \delta \geq 1 - \frac{1}{p^{r}-1}. $$

\end{theorem}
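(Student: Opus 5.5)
The plan is to prove the bound with algebraic geometry (Goppa) codes built on a family of curves over $\F_{q}$ that has many rational points compared with its genus. Since $q=p^{2r}$ is a square, the Ihara constant satisfies $A(q)=\sqrt{q}-1=p^{r}-1$. This means there is a sequence of smooth projective absolutely irreducible curves $X_i$ over $\F_q$, with genus $g_i\to\infty$ and $N_i$ rational points, such that
\begin{align*}
\lim_{i\to\infty}\frac{N_i}{g_i}=p^{r}-1.
\end{align*}
I would take this as the input. Either the Ihara/Tsfasman--Vl\u{a}du\c{t}--Zink modular (or Shimura) curves provide it, or, more explicitly, the Garcia--Stichtenoth tower over $\F_{p^{2r}}$, which attains the Drinfeld--Vl\u{a}du\c{t} bound. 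Establishing this family is the main obstacle. Everything after it is Riemann--Roch bookkeeping, so I would cite the tower rather than reprove it.

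For each $i$, pick a rational point $Q_i$ and let $P_1,\ldots,P_{n_i}$ be the other rational points, so $n_i=N_i-1$. Put $D=P_1+\cdots+P_{n_i}$ and $G=a_iQ_i$. The code is $\bC_i=C_L(D,G)=\{(f(P_1),\ldots,f(P_{n_i})):f\in L(G)\}$.

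For $2g_i-2<a_i<n_i$, Riemann--Roch gives $k_i=\ell(G)=a_i-g_i+1$, and the evaluation map is injective because $\deg G<n_i$. A nonzero $f\in L(G)$ vanishing at $w$ of the $P_j$ lies in $L(G-\sum P_j)$, which forces $w\le a_i$. Hence $d_i\ge n_i-a_i$, and
\begin{align*}
k_i+d_i\ge n_i+1-g_i .
\end{align*}

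Fix a target $\delta$ with $0<\delta<1-\frac{2}{p^{r}-1}$. This is the range in which the constraint $2g-2<a<n$ can be met asymptotically. Choose $a_i=\lfloor(1-\delta)n_i\rfloor$. Since $g_i/n_i\to 1/(p^{r}-1)$, we get $2g_i-2<a_i<n_i$ for all large $i$. Dividing the inequality above by $n_i$ gives
\begin{align*}
\frac{k_i}{n_i}+\frac{d_i}{n_i}\ge 1-\frac{g_i}{n_i}+\frac{1}{n_i},
\end{align*}
and the right-hand side tends to $1-\frac{1}{p^{r}-1}$.

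Finally, pass to a subsequence along which $d_i/n_i$ and $k_i/n_i$ both converge. Compactness of $[0,1]^2$ makes this possible. The limits $R$ and $\delta$ then satisfy $R+\delta\ge 1-\frac{1}{p^{r}-1}$, with the relative distance at least the prescribed value, as the theorem asserts.
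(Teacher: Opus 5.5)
Your route is the standard one behind the citation. The paper gives no proof of this statement and simply invokes Huffman--Pless, whose argument is the algebraic geometry code construction $C_L(D,G)$ on curves with $N_i/g_i\to p^r-1$ (for instance the Garcia--Stichtenoth tower).

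The one step that falls short is your insistence on $2g_i-2<a_i$, so that Riemann--Roch gives $k_i=a_i-g_i+1$ exactly. This is what forces $\delta<1-\frac{2}{p^r-1}$, and it is unnecessary. Riemann's inequality $\ell(G)\ge\deg G+1-g$ holds for every divisor. So for any $g_i\le a_i<n_i$ you still have $k_i\ge a_i-g_i+1\ge 1$ and $d_i\ge n_i-a_i$, hence $k_i+d_i\ge n_i+1-g_i$. With $a_i=\lfloor(1-\delta)n_i\rfloor$, the condition $a_i\ge g_i$ holds for all large $i$ whenever $0<\delta<1-\frac{1}{p^r-1}$. That is the full range on which $1-\delta-\frac{1}{p^r-1}$ is positive. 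Your subsequence step then handles the fact that $k_i/n_i$ need no longer converge.

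As written, your argument has two consequences you should fix:
\begin{itemize}
\item It loses the interval $1-\frac{2}{p^r-1}\le\delta<1-\frac{1}{p^r-1}$.
\item For $p^r=3$ (that is, $q=9$) your admissible interval for $\delta$ is empty, so you construct nothing, even though $R+\delta\ge\frac{1}{2}$ is a non-trivial claim there. For $p^r=2$ the right-hand side is $0$ and there is nothing to prove.
\end{itemize}
This is not a technicality. Read literally, with no prescribed $\delta$, the statement is already satisfied by the trivial codes $\F_q^{n}$ with $R=1$ and $\delta=0$. Its real content, and what the paper needs in Theorem~\ref{thm: asymptotically} to trace out the linear bounds it plots, is the version with a prescribed $\delta$ across that whole range. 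Replacing the Riemann--Roch equality by Riemann's inequality closes the gap; the rest of your proposal is fine.
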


The Tsfasman-Vl\u{a}du\c{t}-Zink-like bound for the codes in the sum-rank metric \cite[Theorem 4]{Berardini} is given as follows.
\begin{theorem}[Tsfasman-Vl\u{a}du\c{t}-Zink-like bound \cite{Berardini}]
Let $p$ be a square prime power. For all real positive numbers $R,\delta \in (0,1)$ satisfying $$R\leq 1-\delta-\frac{2}{\sqrt{p}-1}+\frac{1}{m(\sqrt{p}-1)},$$ 
there exists a linear sum-rank code with matrix size $m \times m$ over $\F_{p}$ of rate at least $R$ and relative sum-rank distance at least $\delta$.
\end{theorem}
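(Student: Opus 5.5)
The plan is to realize the bound with linearized algebraic-geometry codes in the spirit of \cite{Berardini}. I would combine a function field over $\F_p$ with many rational places relative to its genus with the cyclic extension $\F_{p^m}/\F_p$. The estimates are driven by a Riemann--Roch computation on an Ore-polynomial analogue of Riemann--Roch spaces.

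\textbf{Step 1 (curves).} Since $p$ is a square, the Garcia--Stichtenoth tower gives function fields $K_i/\F_p$ of genus $g_i\to\infty$ with $t_i$ rational places and $t_i/g_i\to\sqrt p-1$ (Ihara's constant attained).

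\textbf{Step 2 (codes).} Let $L_i=K_i\otimes_{\F_p}\F_{p^m}$, and let $\sigma$ be the Frobenius acting on the $\F_{p^m}$ factor. Consider the twisted polynomial ring $L_i[T;\sigma]$ with $T^m$ central. For a divisor $D$ supported away from the chosen places $P_1,\dots,P_{t}$, define the space
\[
\Lambda(D)=\Big\{\sum_{j<m} f_jT^j : f_j\in L_i,\ (f_j)\ge -D-jE\Big\}
\]
for a suitable auxiliary divisor $E$ encoding the norm twist. Each element evaluated at a rational place $P_\ell$ is an $\F_p$-linear endomorphism of $\F_{p^m}$, that is, an $m\times m$ matrix. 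The evaluation map therefore lands in the sum-rank space of block length $t$ and matrix size $m\times m$.

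\textbf{Step 3 (distance).} If the block at $P_\ell$ has rank $m-r_\ell$, then the kernel of dimension $r_\ell$ forces the norm of the Ore element to vanish to order $r_\ell$ at $P_\ell$. The norm is a function whose pole divisor has degree at most $m\deg D+\binom m2\deg E$. Comparing zeros and poles gives
\[
\sum_\ell r_\ell\le m\deg D+O(m^2g),
\]
which bounds the sum-rank distance from below by roughly $mt-\deg(\text{norm pole})$.

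\textbf{Step 4 (dimension and normalization).} Riemann--Roch gives each coefficient space dimension at least $\deg(D+jE)-g+1$ over $\F_{p^m}$. This yields
\[
\dim_{\F_p}\ge m\sum_j\big(\deg(D+jE)-g+1\big).
\]
I would choose $E$ of degree $0$ or small, normalize by $m^2t$ for the rate and by $mt$ for the relative distance, and optimize. The target is
\[
R\ge 1-\delta-\frac{(2m-1)g}{mt}.
\]
Letting $g/t\to 1/(\sqrt p-1)$ gives exactly $R\le 1-\delta-\frac{2}{\sqrt p-1}+\frac{1}{m(\sqrt p-1)}$ as an achievable region. Any smaller $R$ is then obtained by shrinking the code.

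\textbf{Main obstacle.} The hardest step is Step 3. One has to show that the rank defect of the evaluated Ore polynomial at $P_\ell$ is controlled by the valuation of its reduced norm. One also has to track the exact genus penalty, which should be $2g$ from Riemann--Roch on the code side plus the norm side, minus a $g/m$ saving. My first attempt would be to work over the completion at $P_\ell$, where $L_i\otimes\widehat{\mathcal O}_{P_\ell}$ splits as $\F_{p^m}[[u]]$. There I would use the fact that the Ore ring becomes a matrix algebra over $\F_p[[u]]$, so the reduced norm is the determinant. Its $u$-adic valuation then bounds the corank of the reduction from below.
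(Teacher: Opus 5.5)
The paper does not prove this statement. It quotes it from \cite{Berardini}, and your plan (Ore polynomials over the constant field extension of a Garcia--Stichtenoth function field, Riemann--Roch for the dimension, reduced norm for the distance) is the linearized AG framework of that source. Your version, however, has a genuine gap in Step~3.

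\textbf{The split-algebra problem.} Comparing zeros and poles of the norm presupposes $\mathrm{Nrd}(f)\neq 0$ for every nonzero $f\in\Lambda(D)$, and nothing in your setup guarantees this. You evaluate $T$ as the Frobenius of $\F_{p^m}$, so you are working in $L_i[T;\sigma]/(T^m-1)\cong\mathrm{End}_{K_i}(L_i)\cong M_m(K_i)$, which is split. An integral twist $jE$ cannot repair this. A clean norm bound for $\sum_j f_jT^j$ with $(f_j)\ge -D-jE$ is only compatible with a relation $T^m=a$ when $(a)=mE$. That makes every local invariant $v_P(a)\deg P/m$ an integer, so the algebra is split again.

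In a split algebra, nonzero elements can have reduced norm $0$, and this really destroys the distance. For effective $D,E$, the element $\sum_{j=0}^{m-1}T^j$ lies in $\Lambda(D)$ and acts on $L_i$ as $\mathrm{Tr}_{L_i/K_i}$, so its reduced norm is $0$. It evaluates at every $P_\ell$ to $\mathrm{Tr}_{\F_{p^m}/\F_p}$, which has rank $1$. So your code contains a codeword of sum-rank weight exactly $t$, i.e.\ relative distance $1/m$, however $D$ is chosen. For non-effective $E$, use $h\sum_jT^j$ with $0\neq h\in\bigcap_j\mathcal{L}(D+jE)$. More generally, every $f$ of $K_i$-rank $\rho$ has weight at most $\rho t$. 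The local fact you single out as the main obstacle (corank bounded by the valuation of the determinant in the completed matrix order) is standard; the difficulty is global.

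\textbf{What is missing: a non-split twist and its cost.} Take $T^m=a$ with $a\in K_i^\times$ such that $(L_i/K_i,\sigma,a)$ is a division algebra, for example $\gcd(v_{P_0}(a),m)=1$ at some rational place $P_0$. Evaluate only at places where $a$ is a unit; there $\F_{p^m}[T;\mathrm{Frob}]/(T^m-a(P))\cong M_m(\F_p)$ by Hilbert~90. Use the fractional twist $f_j\in\mathcal{L}\big(D+\lfloor \tfrac{j}{m}(a)\rfloor\big)$ (or a maximal order), so that $v_P(\mathrm{Nrd}\,f)\ge -mD(P)$ at every place and $\mathrm{Nrd}\,f\neq 0$. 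Only then does $\sum_\ell r_\ell\le m\deg D$ follow.

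This twist is not free. Since $\deg(a)=0$, the floors cost $\frac{m(m-e_P)}{2}\deg P$ in $\F_p$-dimension at each place with $e_P=\gcd(v_P(a),m)<m$. Controlling this cost is the substance of the proof, and your Steps~3--4 never account for it. With $\deg E$ small they would yield $R\ge 1-\delta-g/t$, which is stronger than the theorem. Meanwhile the $O(m^2g)$ term in Step~3 and the target $\frac{(2m-1)g}{mt}$ are asserted rather than derived. For instance, if the places with $\gcd(v_P(a),m)=1$ have total degree about $2g$, the twist costs about $m(m-1)g$. Together with the Riemann--Roch loss $m^2g$, this gives exactly the penalty $\frac{(2m-1)g}{mt}\to\frac{2}{\sqrt p-1}-\frac{1}{m(\sqrt p-1)}$.
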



\section{Concatenated sum-rank codes and asymptotically good sum-rank codes exceeding the TVZ-like bound and the GV-like bound}
\label{sec: aysm}
In this section, we firstly give concatenated sum-rank codes. 

\begin{theorem}
	\label{thm: CSR}
	Let ${\bf C}_{1}$ be an $[n, k_{1}, d_{1}]_{q^{k_{2}}}$ linear code.
	Let ${\bf C}_{2}$ be a linear sum-rank code of block length $t$ and of matrix size $m \times m^{\prime}$ over ${\bf F}_{q}$, where $t$, $m$, and $m^{\prime}$ are positive integers.
	Let $k_{2}$ and $d_{2}$ be the dimension and the minimum distance of ${\bf C}_{2}$, respectively.
	Then we have a linear sum-rank code ${\bf C}$ of block length $nt$ and of matrix size $m \times m^{\prime}$ over ${\bf F}_{q}$.
	Moreover, the dimension of ${\bf C}$ is $k_{1} \cdot k_{2}$ and the minimum sum-rank distance of ${\bf C}$ is at least $d_{1} \cdot d_{2}$.
\end{theorem}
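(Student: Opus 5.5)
The plan is to mimic Forney's concatenation, using $\mathbf{C}_2$ as the inner code. Since $\mathbf{C}_2$ is an $\mathbf{F}_q$-linear space of dimension $k_2$, we fix an $\mathbf{F}_q$-linear isomorphism
\[
\phi:\mathbf{F}_{q^{k_2}}\to \mathbf{C}_2\subseteq \mathbf{F}_q^{(m,m'),\ldots,(m,m')}.
\]
Such a $\phi$ exists because $\mathbf{F}_{q^{k_2}}$ is a $k_2$-dimensional $\mathbf{F}_q$-space; concretely, choose a basis of $\mathbf{F}_{q^{k_2}}$ over $\mathbf{F}_q$ and a basis of $\mathbf{C}_2$, and match them. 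We then define
\[
\mathbf{C}=\{(\phi(c_1),\ldots,\phi(c_n)) : (c_1,\ldots,c_n)\in \mathbf{C}_1\}.
\]
Each $\phi(c_j)$ is a tuple of $t$ matrices of size $m\times m'$, so concatenating them gives an element of the sum-rank space with block length $nt$ and matrix size $m\times m'$. The sum-rank weight of a codeword is
\[
\sum_{j=1}^n wt_{sr}(\phi(c_j)).
\]

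The first step is linearity and dimension. A code linear over $\mathbf{F}_{q^{k_2}}$ is in particular $\mathbf{F}_q$-linear, and $\phi$ is $\mathbf{F}_q$-linear, so the map $\Phi:\mathbf{C}_1\to\mathbf{C}$ given by $(c_j)\mapsto(\phi(c_j))$ is $\mathbf{F}_q$-linear. Hence $\mathbf{C}$ is a linear sum-rank code. The map $\Phi$ is injective because $\phi$ is injective coordinatewise. Therefore
\[
\dim_{\mathbf{F}_q}\mathbf{C}=\dim_{\mathbf{F}_q}\mathbf{C}_1=k_1k_2,
\]
since $|\mathbf{C}_1|=q^{k_2k_1}$.

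The second step is the distance bound. Because $\mathbf{C}$ is linear, it suffices to bound the weight of a nonzero codeword. Take $0\neq c\in\mathbf{C}_1$. Then $c$ has at least $d_1$ nonzero coordinates. For each nonzero $c_j$, the tuple $\phi(c_j)$ is a nonzero codeword of $\mathbf{C}_2$, so its sum-rank weight is at least $d_2$ (the minimum sum-rank distance of the linear code $\mathbf{C}_2$ equals its minimum nonzero weight). For $c_j=0$ the contribution is $0$. Summing over $j$ gives
\[
wt_{sr}(\Phi(c))\ge d_1d_2.
\]

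No step here is a real obstacle. The only point needing care is that $\phi$ must be $\mathbf{F}_q$-linear and bijective, and that linearity of $\mathbf{C}$ is asserted only over $\mathbf{F}_q$; both follow from the basis-matching construction above.
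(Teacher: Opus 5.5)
Your proposal is correct and matches the paper's construction: the paper's composite $\phi\circ\pi$ (an $\mathbf{F}_q$-linear bijection $\mathbf{F}_{q^{k_2}}\to\mathbf{F}_q^{k_2}$ followed by encoding with a generator matrix of $\mathbf{C}_2$) is exactly your isomorphism $\mathbf{F}_{q^{k_2}}\to\mathbf{C}_2$. Your dimension count and weight bound $\sum_j wt_{sr}(\phi(c_j))\ge d_1d_2$ supply the verification that the paper leaves as ``easily verified.''
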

\begin{proof}
	Let $\pi$ be an ${\bf F}_{q}$-linear bijection from ${\bf F}_{q^{k_{2}}}$ to ${\bf F}_{q}^{k_{2}}$. 
	Let $\phi$ be a one-to-one mapping from ${\bf F}_{q}^{k_{2}}$ to ${\bf C}_{2}$, that is $\phi(\bm{c}) = \bm{c} \cdot G$ for any $\bm{c} \in {\bf F}_{q}^{k_{2}}$, where $G$ is a generator matrix of ${\bf C}_{2}$.
	Then we define the sum-rank code $$ {\bf C} = \{ (\phi(\pi(c_{1})), \ldots, \phi(\pi(c_{n}))): (c_{1}, \ldots, c_{n}) \in {\bf C}_{1}  \}. $$
	One can easily verify that ${\bf C}$ has dimension $k_{1} \cdot k_{2}$ and minimum sum-rank distance at least $d_{1} \cdot d_{2}$.
\end{proof}

Based on Theorem \ref{thm: CSR},
we obtain asymptotically good sum-rank codes by concatenating a linear code achieving 
the Tsfasman-Vl\u{a}du\c{t}-Zink bound and a sum-rank code.

\begin{theorem}
	\label{thm: asymptotically}
	Let $p$ be a prime power.
	Let $r$, $m$, $d$ and $t$ be integers.
	Suppose that there is a sum-rank code ${\bf C}$ over $\F_{p}$ of block length $t$ with matrix size $m \times m$, dimension $2r$, and minimum sum-rank distance $d$.
  There exists a sequence of sum-rank codes of matrix size $m \times m$ over $\F_{p}$ with relative sum-rank distance $\delta_{sr} \in (0,1)$
  such that the rate $R_{sr}$ satisfies \begin{align*}
		\frac{m^{2}t}{2r}R_{sr} + \frac{mt}{d}\delta_{sr} \geq 1 - \frac{1}{p^{r}-1}.
	\end{align*}
\end{theorem}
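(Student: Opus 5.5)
The plan is to combine Theorem \ref{thm: CSR} with the Tsfasman-Vl\u{a}du\c{t}-Zink bound of Theorem \ref{thm: TVZ bound}. Set $q=p^{2r}$. Since the given sum-rank code ${\bf C}$ has dimension $k_2=2r$ over $\F_p$, we have $\F_{q}=\F_{p^{k_2}}$. This is exactly the field over which an outer Hamming-metric code must be defined for Theorem \ref{thm: CSR} to apply with inner code ${\bf C}$. Theorem \ref{thm: TVZ bound} provides a sequence of $[n_i,k_i,d_i]_{q}$ linear codes ${\bf C}_{1,i}$ with $n_i\to\infty$ whose limits satisfy $R+\delta\ge 1-\frac{1}{p^r-1}$. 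I would pass to a subsequence if needed so that both limits $R=\lim k_i/n_i$ and $\delta=\lim d_i/n_i$ exist. I would also restrict to the regime $\delta>0$ so that the resulting $\delta_{sr}$ lies in $(0,1)$.

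Next, for each $i$, apply Theorem \ref{thm: CSR} with outer code ${\bf C}_{1,i}$ and inner code ${\bf C}$. This gives a linear sum-rank code ${\bf D}_i$ over $\F_p$ with matrix size $m\times m$, block length $n_i t$, dimension $2r k_i$, and minimum sum-rank distance at least $d_i d$.

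The remaining step is the parameter bookkeeping. By the definitions in the introduction, the rate is
\[
R_{sr}=\lim_i \frac{2rk_i}{m^2 n_i t}=\frac{2r}{m^2t}R .
\]
Taking the relative distance as the limit of the guaranteed lower bound, or passing to a further subsequence and using the true distances, gives
\[
\delta_{sr}\ge \lim_i \frac{d_i d}{m n_i t}=\frac{d}{mt}\,\delta .
\]
Hence $\frac{m^2t}{2r}R_{sr}=R$ and $\frac{mt}{d}\delta_{sr}\ge\delta$, so that
\[
\frac{m^2t}{2r}R_{sr}+\frac{mt}{d}\delta_{sr}\ge R+\delta\ge 1-\frac{1}{p^r-1}.
\]

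The main obstacle is a small technical point rather than a conceptual one: the actual minimum distance may exceed the bound $d_1d_2$. I would handle this in one of two ways. The first is to define $\delta_{sr}$ via the guaranteed bound. The second, which I would use, is to note that the inequality only improves when $\delta_{sr}$ increases, while $\delta_{sr}\le 1$ holds automatically since sum-rank distance never exceeds $mt n_i$. One must also check that $\delta_{sr}<1$. This holds whenever $R>0$: a code of positive rate cannot have relative distance $1$ by the Singleton-like bound, and the subsequence can be chosen to ensure this. Everything else is routine substitution.
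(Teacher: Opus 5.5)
Your proposal is correct and follows essentially the same route as the paper: take a TVZ sequence of $[n_i,k_i,d_i]_{p^{2r}}$ codes, concatenate each with ${\bf C}$ via Theorem~\ref{thm: CSR}, and read off $R_{sr}=\frac{2r}{m^2t}R$ and $\delta_{sr}\ge\frac{d}{mt}\delta$. Your extra remarks fill in details the paper leaves implicit: the subsequence argument, and the check that $\delta_{sr}<1$ when $R>0$, which follows because the Singleton-like bound gives $R_{sr}+\delta_{sr}\le 1$ asymptotically.
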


\begin{proof}
  By Theorem \ref{thm: TVZ bound}, there exists a sequence $\mc{C}=\{ {\bf {C}}_{i} \}_{i \geq 1}$ of linear codes,
  where ${\bf {C}}_{i}$ is an $[n_{i}, k_{i}, d_{i}]_{p^{2r}}$ code, such that as the lengths go to infinity,
  the relative distance $\delta(\mc{C}) = \lim\limits_{i \rightarrow \infty} \frac{d_{i}}{n_{i}}$ and 
  the rate $R(\mc{C}) = \lim\limits_{i \rightarrow \infty} \frac{k_{i}}{n_{i}}$ satisfy 
  \begin{align}
    \label{eq: rate_distance_eq}
    R(\mc{C}) \geq 1 - \delta(\mc{C}) - \frac{1}{p^{r}-1}.
  \end{align}
  By concatenating the $[n_{i}, k_{i}, d_{i}]_{p^{2r}}$ linear code ${\bf C}_{i}$ and ${\bf C}$,
	Theorem \ref{thm: CSR} shows that there exists 
  a sequence of linear sum-rank codes $\mc{C}^{\prime} = \{ {\bf C}_{i}^{\prime} \}_{i \geq 1}$,
  where ${\bf C}_{i}^{\prime}$ has block length $n_{i}t$ and matrix size $m \times m$.
  In addition, the dimension of ${\bf C}_{i}^{\prime}$ is $k_{i} \cdot 2r$ and minimum sum-rank distance $d_{r}({\bf C}_{i}^{\prime})$ is at least $d_{i} \cdot d$.
  The relative sum-rank distance of $\mc{C}^{\prime}$ is $\delta_{sr} = \lim \limits_{i \rightarrow \infty} \frac{d_{r}({\bf C}_{i}^{\prime})}{m n_{i} t}$
  and the rate of $\mc{C}^{\prime}$ is $R_{sr} = \lim \limits_{i \rightarrow \infty} \frac{k_{i} \cdot 2r}{m^{2}n_{i}t}$.
  Then 
  \begin{align*}
    \delta_{sr} & \geq \lim \limits_{i \rightarrow \infty} \frac{d_{i} \cdot d}{m n_{i} t} = \delta(\mc{C}) \cdot \frac{d}{mt}, \\
    R_{sr} &= R(\mc{C}) \cdot \frac{2r}{m^{2}t}.
  \end{align*}
  By (\ref{eq: rate_distance_eq}),
  we have $\frac{m^{2}t}{2r}R_{sr} + \frac{mt}{d}\delta_{sr} \geq 1 - \frac{1}{p^{r}-1}$.


\end{proof}

In the following subsections, we instantiate the codes $\bf {C}$ in Theorem \ref{thm: asymptotically} by a sum-rank code with minimum sum-rank distance $2$
and linearized Reed-Solomon codes, respectively.

\subsection{Concatenated codes from sum-rank codes with minimum sum-rank distance $2$}
When the sum-rank metric code ${\bf C}$ in Theorem \ref{thm: asymptotically} has minimum sum-rank distance $2$,
we have the following corollary.
\begin{theorem}
	\label{thm: asy_bound_d2}
	Let $p$ be a prime power.
	Let $t$ and $m$ be integers with $m^{2}(t-1)$ being even.
  Let $r = m^{2}(t-1)/2$.
  There exists a sequence of linear sum-rank codes of matrix size $m \times m$ over $\F_{p}$ with relative sum-rank distance $\delta_{sr} \in (0,1)$
  such that the rate $R_{sr}$ satisfies
  \begin{align*}
		\frac{t}{t-1}R_{sr} + \frac{mt}{2} \delta_{sr} \geq 1 - \frac{1}{p^{r}-1}.
	\end{align*}
\end{theorem}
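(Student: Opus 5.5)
The plan is to apply Theorem \ref{thm: asymptotically} with a concrete inner code $\bC$ of minimum sum-rank distance $2$. The natural candidate is a parity-check type sum-rank code of block length $t$ and matrix size $m \times m$ over $\F_{p}$. We take
\[
\bC=\{({\bf x}_1,\ldots,{\bf x}_t)\in \F_p^{(m,m),\ldots,(m,m)} : {\bf x}_1+\cdots+{\bf x}_t = 0\},
\]
the matrix analogue of the single parity-check code. It is the kernel of the surjective $\F_p$-linear map $({\bf x}_1,\ldots,{\bf x}_t)\mapsto \sum_i {\bf x}_i$ onto $\F_p^{(m,m)}$, so its dimension is $m^2t-m^2=m^2(t-1)=2r$. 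By hypothesis this is even, which is exactly what Theorem \ref{thm: asymptotically} requires.

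Next we check that $d_{sr}(\bC)=2$ (we need $t\ge 2$; for $t=1$ the code is trivial and $r=0$, so we exclude it). Let ${\bf x}$ be a nonzero codeword. If only one block ${\bf x}_i$ were nonzero, the parity condition would force ${\bf x}_i=0$, a contradiction. Hence at least two blocks are nonzero, each of rank at least $1$, so $wt_{sr}({\bf x})\ge 2$. Equality is attained by ${\bf x}_1=E_{11}$, ${\bf x}_2=-E_{11}$ and all other blocks zero. For the theorem only the lower bound $d\ge 2$ matters, since Theorem \ref{thm: CSR} gives distance at least $d_1 d$.

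Finally we substitute $2r=m^2(t-1)$ and $d=2$ into the inequality of Theorem \ref{thm: asymptotically}. This gives $\frac{m^2t}{2r}=\frac{t}{t-1}$ and $\frac{mt}{d}=\frac{mt}{2}$, which is precisely the claimed bound. The resulting codes are linear because the outer TVZ codes and $\bC$ are linear and the concatenation map of Theorem \ref{thm: CSR} is $\F_p$-linear.

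The only step needing any care is choosing and verifying the distance-$2$ inner code with the right dimension. The parity-check construction handles it directly, so no real obstacle is expected.
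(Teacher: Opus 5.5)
Your proposal is correct and follows essentially the same route as the paper: it uses the same parity-check inner code $\{(X_1,\dots,X_t): \sum_i X_i = O\}$ of dimension $m^2(t-1)=2r$ and minimum sum-rank distance $2$, then substitutes $d=2$ into Theorem~\ref{thm: asymptotically}. Your remark that $t\ge 2$ is needed (so that $r>0$ and the right-hand side $1-\frac{1}{p^{r}-1}$ is defined) is a small clarification the paper leaves implicit.
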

\begin{proof}
    Let $O$ be the $m \times m$ zero matrix.
	Let \begin{align}
    \label{eq: code_C}
    {\bf C}= \{ (X_{1}, \dots, X_{t}) \in \Pi: \sum_{i=1}^{t} X_{i} = O \}
  \end{align}
   be a sum-rank code in $\Pi = \F_{p}^{(m, m),\dots,(m, m)}$ with block length $t$.
    Since $X_{1}, \dots, X_{t-1}$ can be arbitrary and they are matrices in $\F_{p}^{(m, m)}$, the dimension of ${\bf C}$ is $m^{2}(t-1)$.
    Let $U = (U_{1}, \dots, U_{t})$ and $V = (V_{1}, \dots, V_{t})$ be different codewords in $\bf{C}$.
    Assume that there is only one entry $U_{j}$ of $U$ and $V_{j}$ of $V$ such that $U_{j} \neq V_{j}$ for an integer $j \in \{ 1, \dots,t \}$.
    Then $U_{i} = V_{i}$ for all $i \in S:= \{ 1, \dots, t \} \setminus \{ j \}$.
    By (\ref{eq: code_C}), we have $U_{j} = -\sum_{i \in S} U_{i}$  and $V_{j} = -\sum_{i \in S} V_{i}$.
    Then $U_{j} = V_{j}$, which contradicts our assumption.
    Hence there are at least two integers $i_{1}$ and $j_{1}$ in $\{ 1, \dots,t \}$ such that $U_{i_{1}} \neq V_{i_{1}}$
    and $U_{j_{1}} \neq V_{j_{1}}$, and the minimum distance of ${\bf C}$
	is at least $2$. Clearly, there exists a codeword in $\bf{C}$ with sum-rank weight $2$.
    Hence the minimum sum-rank distance is $2$.
    By Theorem \ref{thm: asymptotically}, the conclusion follows.
\end{proof}

Let $m=2$ in Theorem \ref{thm: asy_bound_d2}.
We have the following corollary.
\begin{corollary}
	\label{cor: asy_bound_d2}
	Let $p$ be a prime power. Let $r$ be a positive even integer.
  There exists a sequence of linear sum-rank codes of matrix size $2 \times 2$ over $\F_{p}$ with relative sum-rank distance $\delta_{sr} \in (0,1)$
  such that the rate $R_{sr}$ satisfies 
  \begin{align*}
		\frac{r+2}{r}R_{sr} + \frac{r+2}{2} \delta_{sr} \geq 1 - \frac{1}{p^{r}-1},
	\end{align*}

\end{corollary}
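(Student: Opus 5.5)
The plan is to specialize Theorem \ref{thm: asy_bound_d2} to $m=2$ and choose the block length $t$ so that the parameter called $r$ there is the given $r$. The whole proof is a substitution, so the work is in matching parameters and checking that the hypotheses of Theorem \ref{thm: asy_bound_d2} are met.

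\emph{Choice of $t$.} With $m=2$, Theorem \ref{thm: asy_bound_d2} uses the parameter $r=m^{2}(t-1)/2=2(t-1)$. Given a positive even integer $r$, I will set $t=r/2+1$, so that $2(t-1)=r$. The hypotheses then hold:
\begin{itemize}
\item $t$ is an integer with $t\ge 2$, because $r$ is even and positive;
\item $m^{2}(t-1)=4(t-1)$ is even, as Theorem \ref{thm: asy_bound_d2} requires;
\item the derived parameter $m^{2}(t-1)/2$ equals $r$, so the right-hand side $1-\frac{1}{p^{r}-1}$ is unchanged.
\end{itemize}
Evenness of $r$ is exactly what makes $t$ an integer, which is why the corollary assumes it.

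\emph{Rewriting the coefficients.} Next I express the coefficients in terms of $r$:
\begin{align*}
\frac{t}{t-1}&=\frac{r/2+1}{r/2}=\frac{r+2}{r},\\
\frac{mt}{2}&=t=\frac{r+2}{2}.
\end{align*}
Substituting these into the inequality of Theorem \ref{thm: asy_bound_d2} gives exactly the stated inequality, for a sequence of linear sum-rank codes of matrix size $2\times 2$ over $\F_{p}$ with $\delta_{sr}\in(0,1)$.

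There is no real obstacle here. The only points needing care are that $t$ is a legitimate integer block length with $t\ge 2$, so the auxiliary distance-$2$ code from the proof of Theorem \ref{thm: asy_bound_d2} exists, and that the algebraic simplification of the coefficients is correct.
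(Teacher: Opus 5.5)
Your proposal is correct and is exactly the paper's argument: set $m=2$ in Theorem \ref{thm: asy_bound_d2}, so that its parameter $m^{2}(t-1)/2=2(t-1)$ equals the given $r$ exactly when $t=r/2+1$ (an integer because $r$ is even), and your simplifications $\frac{t}{t-1}=\frac{r+2}{r}$ and $\frac{mt}{2}=\frac{r+2}{2}$ are right. The paper leaves this substitution implicit, and your explicit parameter matching fills in the details it omits.
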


For $p=2$ and $r=4$,
Corollary \ref{cor: asy_bound_d2} shows that 
there exists a sequence of 
linear sum-rank codes over ${\bf{F}}_{2}$ of matrix size $2 \times 2$ with relative sum-rank distance $\delta_{sr}$ and rate $R_{sr}$ 
satisfying
$$ \frac{3}{2}R_{sr} + 3 \delta_{sr} \geq \frac{14}{15}. $$
Recall that the GV-like bound with $q=2$ and $n=m=2$
shows that there exists a sequence of linear
sum-rank codes over ${\bf{F}}_{2}$ of matrix size $2 \times 2$ with rate $R_{sr}$ and relative sum-rank distance at least $\delta_{sr}$, where  
$R_{sr}$ and $\delta_{sr}$ satisfy (\ref{eq: asy_GV_bound}).
Figure \ref{fig:p2r4_gv} gives the comparison of Corollary \ref{cor: asy_bound_d2} and
the GV-like bound for linear sum-rank codes over $\F_{2}$ of matrix size $2 \times 2$.
One can see that given $\delta_{sr}$,
the value of $R_{sr}$ given by our bound exceeds the GV-like bound.

For $p=3^2$ and $r=6$,
Corollary \ref{cor: asy_bound_d2} shows that 
there exists a sequence of 
linear sum-rank codes over ${\bf{F}}_{9}$ of matrix size $2 \times 2$ with relative sum-rank distance $\delta_{sr}$ and rate $R_{sr}$ 
satisfying
$$ \frac{4}{3}R_{sr} + 4 \delta_{sr} \geq 1 - \frac{1}{3^{12}-1}. $$
The TVZ-like bound with such $p$ and $m=2$ shows that there exists a sequence of linear
sum-rank codes over ${\bf{F}}_{9}$ of matrix size $2 \times 2$ with rate $R_{sr}$ and relative sum-rank distance at least $\delta_{sr}$, where  
$R_{sr} + \delta_{sr} \leq \frac{1}{4}.$
Figure \ref{fig:p9r6_tvz} gives the 
comparison of Corollary \ref{cor: asy_bound_d2} with the TVZ-like bound for sum-rank codes over $\F_{9}$ of matrix size $2 \times 2$.

Note that the TVZ-like bound is invalid for linear sum-rank codes over $\F_{4}$ of matrix size $2 \times 2$.
Corollary \ref{cor: asy_bound_d2} shows that there exists a sequence of 
linear sum-rank codes over ${\bf{F}}_{4}$ of matrix size $2 \times 2$ with relative sum-rank distance 
$\delta_{sr}$ and rate $R_{sr}$ satisfying $$ \frac{3}{2}R_{sr} + 3 \delta_{sr} \geq \frac{254}{255}, $$ see Figure \ref{fig:p4r4}.

\begin{figure}[htbp]
\centering
\begin{minipage}{0.45\textwidth}
  \centering
  \includegraphics[width=\linewidth]{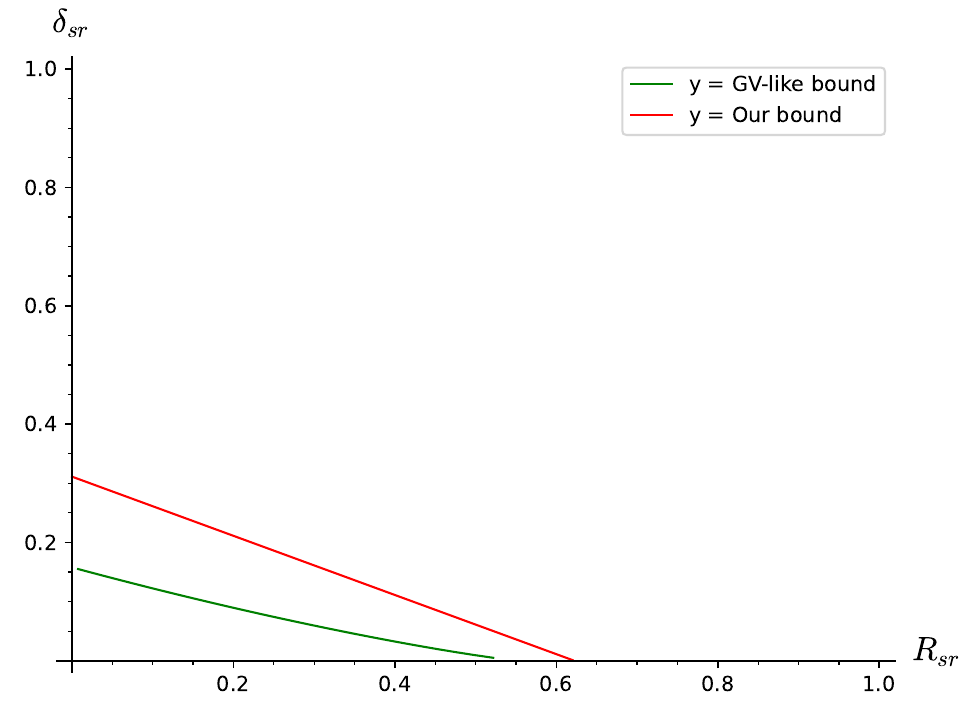}
  \caption{Comparison of Corollary \ref{cor: asy_bound_d2} with the GV-like bound for sum-rank codes of matrix size $2 \times 2$ over $\F_{2}$.}
  \label{fig:p2r4_gv}
\end{minipage}
\hfill
\begin{minipage}{0.45\textwidth}
  \centering
  \includegraphics[width=\linewidth]{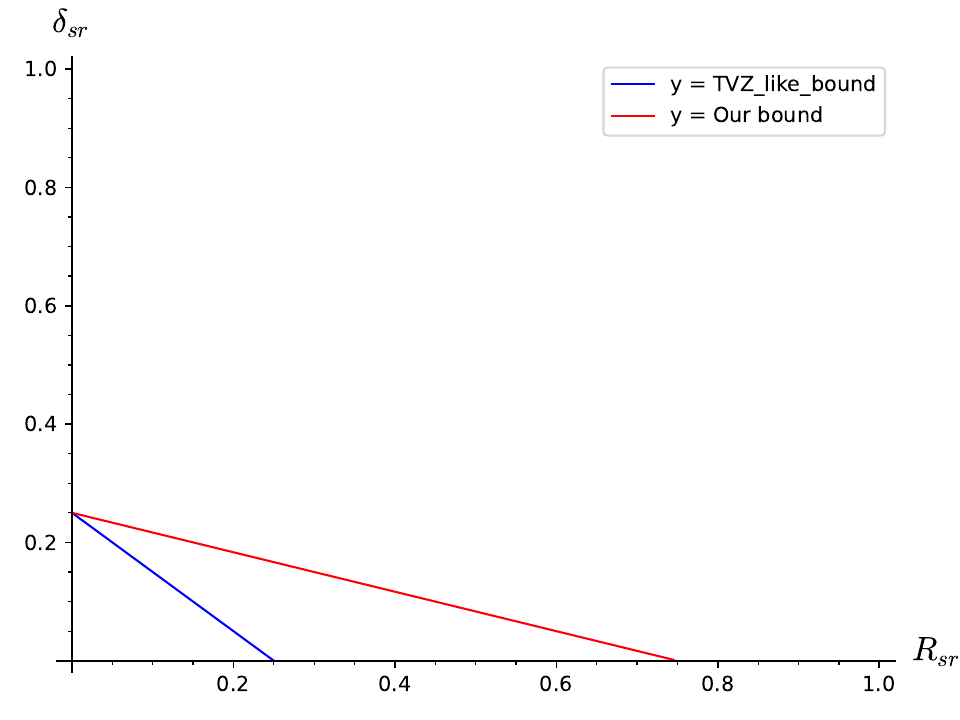}
  \caption{Comparison of Corollary \ref{cor: asy_bound_d2} with the TVZ-like bound for sum-rank codes of matrix size $2 \times 2$ over $\F_{9}$.}
  \label{fig:p9r6_tvz}
\end{minipage}
\end{figure}

  


\begin{figure}[htbp]
    \centering
    \includegraphics[width=0.43\linewidth]{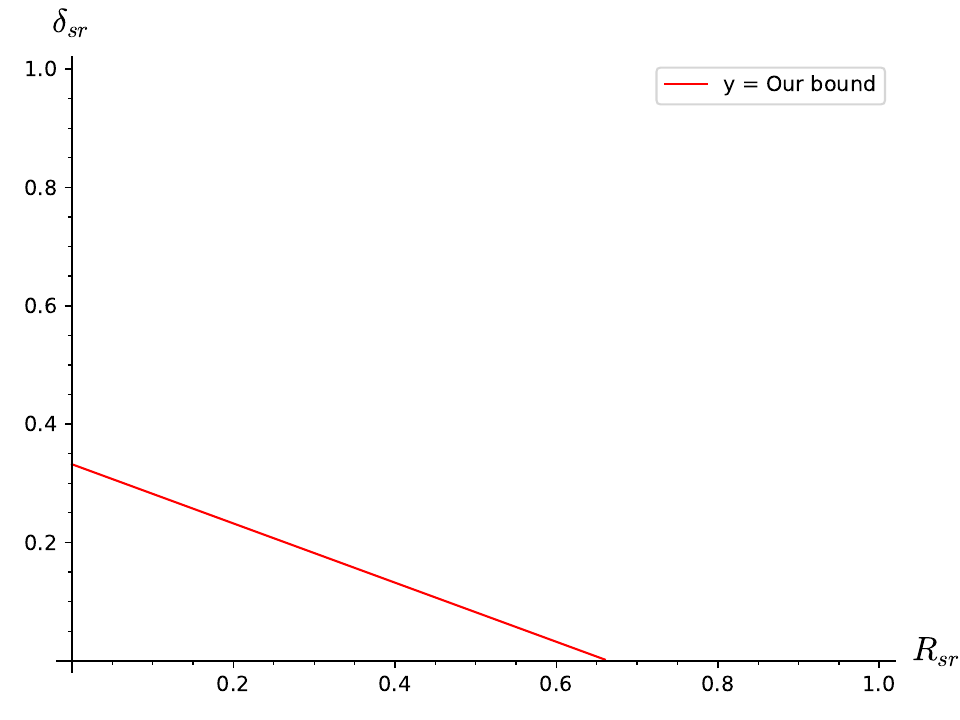}
    \caption{Corollary \ref{cor: asy_bound_d2} with $p=4$ and $r=4$.}
    \label{fig:p4r4}
\end{figure}

\subsection{Concatenated codes from linearized Reed-Solomon codes}

The linearized Reed-Solomon codes constructed in \cite{MP1} are linear sum-rank codes of block size $m \times m$ over $\F_{q}$ with block length $t \leq q-1$.
If the minimum sum-rank distance of a linearized Reed-Solomon code is $d$, where $0 < d \leq mt$, then its dimension is $k=m(mt-d+1)$.
Note that when $t=1$, a linearized Reed-Solomon code is a Gabidulin code.

When the sum-rank metric code ${\bf C}$ in Theorem \ref{thm: asymptotically} is a linearized Reed-Solomon code,
we have the following result.
\begin{theorem}
	\label{thm: from_rs}
	Let $p$ be a prime power. Let $m$, $t$, and $d$  be integers such that 
	$t \leq p-1$, $0 < d \leq mt$, and $m(mt-d+1)$ is an even integer.
  Let $r = m(mt-d+1)/2$.
  There exists a sequence of linear sum-rank codes of matrix size $m \times m$ over $\F_{p}$ with relative sum-rank distance $\delta_{sr} \in (0,1)$
  such that the rate $R_{sr}$ satisfies
  \begin{align*}
		\frac{m^{2}t}{2r}R_{sr} + \frac{mt}{d}\delta_{sr} \geq 1 - \frac{1}{p^{r}-1}.
	\end{align*}
\end{theorem}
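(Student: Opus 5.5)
The plan is to instantiate the code $\bf C$ in Theorem \ref{thm: asymptotically} with a linearized Reed-Solomon code from \cite{MP1}, and then read off the inequality directly.

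First, since $t \leq p-1$, there exists a linearized Reed-Solomon code $\bf C$ over $\F_{p}$ of block length $t$ and matrix size $m \times m$. For any prescribed $d$ with $0 < d \leq mt$, this code has minimum sum-rank distance $d$ and dimension $k = m(mt-d+1)$. These are exactly the parameters recalled before the statement: linearized Reed-Solomon codes exist over $\F_q$ for block length up to $q-1$ and are MSRD. Here $q=p$, so the hypothesis $t \leq p-1$ is what guarantees existence.

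Second, the hypothesis that $m(mt-d+1)$ is even lets us write the dimension as $2r$ with $r = m(mt-d+1)/2$ a positive integer. So $\bf C$ is a sum-rank code over $\F_{p}$ of block length $t$, matrix size $m\times m$, dimension $2r$ and minimum sum-rank distance $d$. These are precisely the hypotheses of Theorem \ref{thm: asymptotically}.

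Third, applying Theorem \ref{thm: asymptotically} yields the sequence of linear sum-rank codes. By Theorem \ref{thm: CSR}, these are concatenations of the TVZ codes over $\F_{p^{2r}}$ from Theorem \ref{thm: TVZ bound} with $\bf C$, hence linear. Their rate and relative distance satisfy $\frac{m^{2}t}{2r}R_{sr} + \frac{mt}{d}\delta_{sr} \geq 1 - \frac{1}{p^{r}-1}$.

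The only step needing any care is the first: confirming that the cited linearized Reed-Solomon codes exist over the prime-power field $\F_{p}$ with the stated block length and the MSRD dimension formula. Everything after that is a direct substitution.
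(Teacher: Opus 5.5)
Your proposal is correct and follows the paper's argument. The paper obtains this theorem by taking the code $\bf C$ in Theorem~\ref{thm: asymptotically} to be a linearized Reed--Solomon code over $\F_{p}$ with block length $t \leq p-1$, matrix size $m \times m$, minimum sum-rank distance $d$ and dimension $m(mt-d+1) = 2r$. Your extra remarks (that $r \geq 1$ because $d \leq mt$, and that linearity comes from Theorem~\ref{thm: CSR}) only fill in details the paper leaves implicit.
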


Let $m$ be even and $d=\frac{mt}{2}$ in Theorem \ref{thm: from_rs}.
We have the following corollary.
\begin{corollary}
	\label{cor: from_rs}
	Let $p$ be a prime power. Let $m$ be an even integer and let $t$ be an integer with 
	$t \leq p-1$. Let $r = \frac{m^2t}{4} + \frac{m}{2}$.
  There exists a sequence of linear sum-rank codes of matrix size $m \times m$ over $\F_{p}$ with relative sum-rank distance $\delta_{sr} \in (0,1)$
  such that the rate $R_{sr}$ satisfies
  \begin{align}
    \label{eq: cor_from_rs}
		\frac{2mt}{mt+2}R_{sr} + 2\delta_{sr} \geq 1 - \frac{1}{p^{r}-1}.
	\end{align}
	
\end{corollary}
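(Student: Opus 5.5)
Corollary \ref{cor: from_rs} follows by specializing Theorem \ref{thm: from_rs} to $d = \frac{mt}{2}$ and simplifying the two coefficients; no new coding argument is needed.

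First we check that the hypotheses of Theorem \ref{thm: from_rs} hold. Since $m$ is even, $d=\frac{mt}{2}$ is a positive integer. Because $t \geq 1$, we have $0<d\leq mt$, and $t\leq p-1$ is assumed. The dimension of the linearized Reed-Solomon code is
\begin{align*}
m\left(mt-\tfrac{mt}{2}+1\right)=\tfrac{m^2t}{2}+m.
\end{align*}
This is even: $m$ is even, and $\frac{m^2t}{2}=2\cdot\frac{m}{2}\cdot\frac{m}{2}\, t$ is also even. Hence the parameter $r$ in Theorem \ref{thm: from_rs} equals $\frac{1}{2}\left(\frac{m^2t}{2}+m\right)=\frac{m^2t}{4}+\frac{m}{2}$, which matches the $r$ of the corollary.

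Next we substitute into the inequality of Theorem \ref{thm: from_rs}. The coefficient of $R_{sr}$ is
\begin{align*}
\frac{m^2t}{2r}=\frac{m^2t}{\frac{m^2t}{2}+m}=\frac{2mt}{mt+2},
\end{align*}
obtained by dividing numerator and denominator by $m/2$. The coefficient of $\delta_{sr}$ is $\frac{mt}{d}=2$. The right-hand side $1-\frac{1}{p^r-1}$ is unchanged. This yields (\ref{eq: cor_from_rs}).

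No step is a real obstacle. The only point needing care is the evenness of $m(mt-d+1)$, and that is exactly why $m$ is assumed even.
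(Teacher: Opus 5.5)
Your proposal is correct and follows exactly the paper's route: the paper obtains this corollary by setting $d=\frac{mt}{2}$ in Theorem~\ref{thm: from_rs}. Your checks of the hypotheses (integrality of $d$, evenness of $m(mt-d+1)$, the value of $r$) and your simplification of the coefficients to $\frac{2mt}{mt+2}$ and $2$ are accurate, and they fill in the routine verifications the paper leaves implicit.
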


We give a comparison of Corollary \ref{cor: from_rs} and the GV-like bound for sum-rank codes over $\F_{3}$ of matrix size $2 \times 2$,
in Figure \ref{fig:p3m2t2_different_t_with_gv}. Note that by choosing a different value of $t$ in Corollary \ref{cor: from_rs},
the bound given in (\ref{eq: cor_from_rs}) is different. The red line (respectively, blue line) in the figure is given by Corollary \ref{cor: from_rs} with $t=1$ (respectively, $t=2$).
For both choices of $t$, our bound exceeds the GV-like bound for some ranges of $\delta_{sr}$.

The comparison of Corollary \ref{cor: from_rs} and the TVZ-like bound for sum-rank codes over $\F_{4^2}$ of matrix size $4 \times 4$,
is given in Figure \ref{fig:p16t8m4_tvz}, where the parameters in Corollary \ref{cor: from_rs} are given by $p=16$, $m=4$, and $t=8$.



\begin{figure}[htbp]
\centering
\begin{minipage}{0.45\textwidth}
  \centering
  \includegraphics[width=\linewidth]{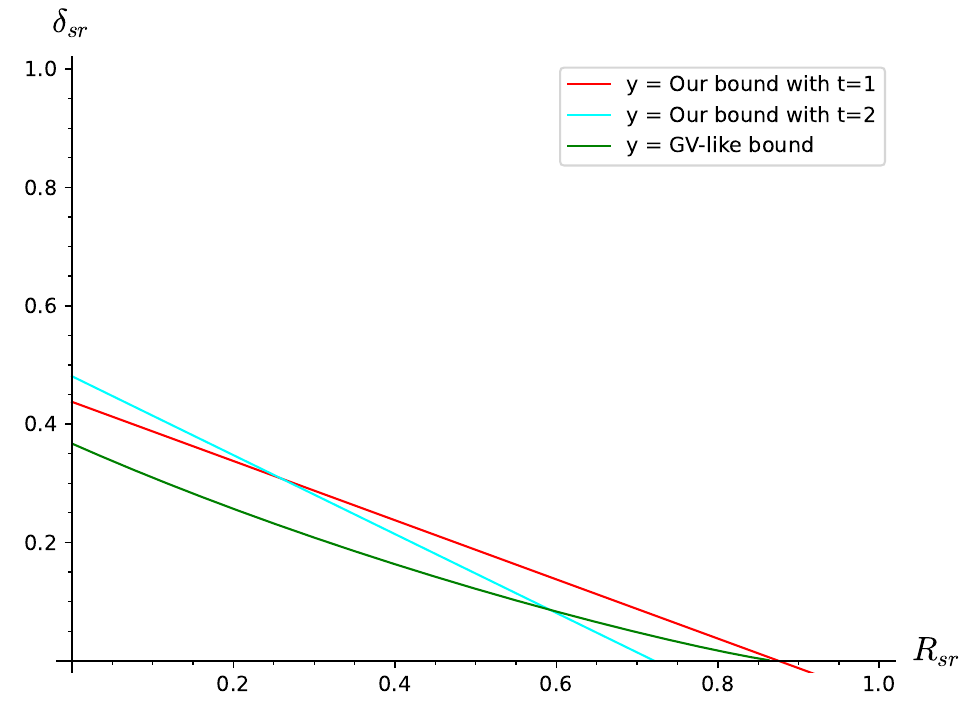}
  \caption{Comparison of the GV-like bound and Corollary \ref{cor: from_rs} for sum-rank codes over $\F_{3}$ of matrix size $2 \times 2$.}
  \label{fig:p3m2t2_different_t_with_gv}
\end{minipage}
\hfill
\begin{minipage}{0.45\textwidth}
  \centering
  \includegraphics[width=\linewidth]{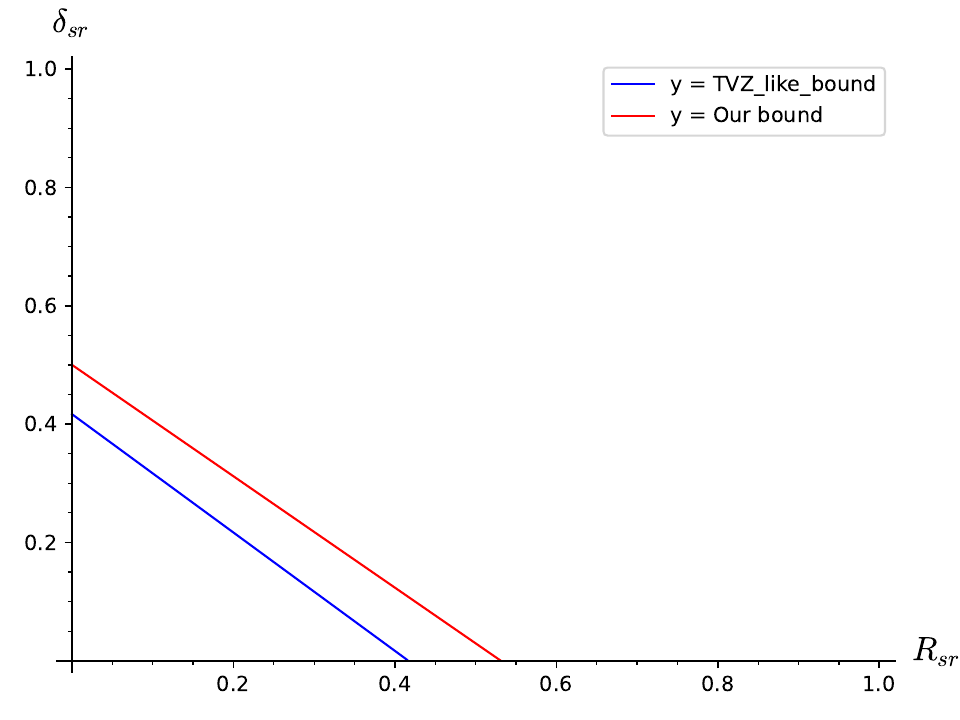}
  \caption{Comparison of the TVZ-like bound and Corollary \ref{cor: from_rs} for sum-rank codes over $\F_{4^2}$ of matrix size $4 \times 4$.}
  \label{fig:p16t8m4_tvz}
\end{minipage}
\end{figure}



Let $m$ be even and let $d=mt$ in Theorem \ref{thm: from_rs}.
We have the following corollary.
\begin{corollary}
	\label{cor: from_rs_max}
	Let $p$ be a prime power. Let $m$ be an even integer and let $t$ be an integer with 
	$t \leq p-1$.
	Let $r = \frac{m}{2}$. 
  There exists a sequence of linear sum-rank codes of matrix size $m \times m$ over $\F_{p}$ with relative sum-rank distance $\delta_{sr} \in (0,1)$
  such that the rate $R_{sr}$ satisfies \begin{align}
		mtR_{sr} + \delta_{sr} \geq 1 - \frac{1}{p^{r}-1}.
	\end{align}
\end{corollary}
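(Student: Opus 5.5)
This corollary is a direct specialization of Theorem \ref{thm: from_rs}, so we only need to check that the choice $d = mt$ is admissible and then simplify the resulting inequality.

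\textbf{Admissibility.} We take $m$ even, $t \leq p-1$, and $d = mt$.
\begin{itemize}
\item The condition $0 < d \leq mt$ holds trivially.
\item Then $mt - d + 1 = 1$, so the dimension of the linearized Reed-Solomon code is $m(mt-d+1) = m$.
\item Since $m$ is even, this dimension is even, and the parameter in Theorem \ref{thm: from_rs} is $r = m(mt-d+1)/2 = m/2$, which matches the statement.
\end{itemize}
So all hypotheses of Theorem \ref{thm: from_rs} are satisfied. Its underlying code is the linearized Reed-Solomon code of block length $t$ with $d = mt$, which is an MSRD code of dimension $m$ over $\F_{p}$. This is exactly the code $\bf C$ needed in Theorem \ref{thm: asymptotically} with $2r = m$.

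\textbf{Simplifying the coefficients.} Substituting into the inequality of Theorem \ref{thm: from_rs}:
\begin{itemize}
\item the coefficient of $R_{sr}$ becomes $\frac{m^{2}t}{2r} = \frac{m^{2}t}{m} = mt$;
\item the coefficient of $\delta_{sr}$ becomes $\frac{mt}{d} = 1$.
\end{itemize}
This yields $mtR_{sr} + \delta_{sr} \geq 1 - \frac{1}{p^{r}-1}$ with $r = m/2$, which is the claimed bound.

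The only point that needs care is the parity condition, and it is exactly why $m$ is assumed even: it makes $2r = m$ an integer dimension, so the TVZ code over $\F_{p^{2r}} = \F_{p^{m}}$ from Theorem \ref{thm: TVZ bound} can be concatenated via Theorem \ref{thm: CSR}.
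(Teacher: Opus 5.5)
Your proposal is correct and matches the paper exactly: the corollary is obtained by substituting $d = mt$ into Theorem~\ref{thm: from_rs}, which forces $r = m/2$ and reduces the two coefficients to $mt$ and $1$. One small wording point: $m$ must be even so that $r = m/2$ is an integer, which makes $\F_{p^{m}} = \F_{p^{2r}}$ a field of square order as the TVZ bound requires; it is not needed to make $2r = m$ an integer, since that holds regardless.
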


The comparison of Corollary \ref{cor: from_rs_max} with the GV-like bound for sum-rank codes of matrix size $2 \times 2$ over $\F_{7}$
is given in Figure \ref{fig:p7t2m2_gv}, where the parameters in Corollary \ref{cor: from_rs_max} are given by $p=7$,$m=2$ and $t=2$.

The comparison of Corollary \ref{cor: from_rs_max} with the TVZ-like bound for sum-rank codes of matrix size $2 \times 2$ over $\F_{5^2}$
is given in Figure \ref{fig:p25t4m2_tvz}, where the parameters in Corollary \ref{cor: from_rs_max} are given by $p=5^2$, $m=2$ and $t=4$.

In addition, we also give a comparison of Corollary \ref{cor: from_rs_max}, the GV-like bound, and the TVZ-like bound
for sum-rank codes of matrix size $2 \times 2$ over $\F_{7^2}$ in Figure \ref{fig:p49r2m2t2_multiple_bound_maximal}.
The red line (respectively, blue line) in the figure is given by Corollary \ref{cor: from_rs_max} with $t=1$ (respectively, $t=2$).


\begin{figure}[htbp]
\centering
\begin{minipage}{0.45\textwidth}
  \centering
  \includegraphics[width=\linewidth]{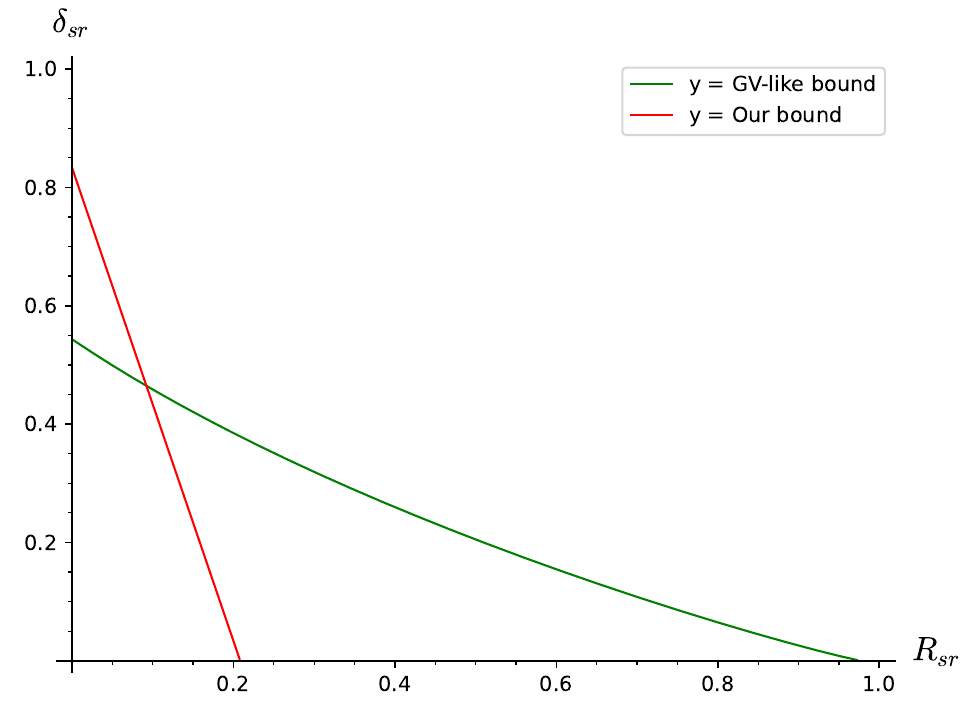}
  \caption{Comparison of Corollary \ref{cor: from_rs_max} with the GV-like bound for sum-rank codes of matrix size $2 \times 2$ over $\F_{7}$.}
  \label{fig:p7t2m2_gv}
\end{minipage}
\hfill
\begin{minipage}{0.45\textwidth}
  \centering
  \includegraphics[width=\linewidth]{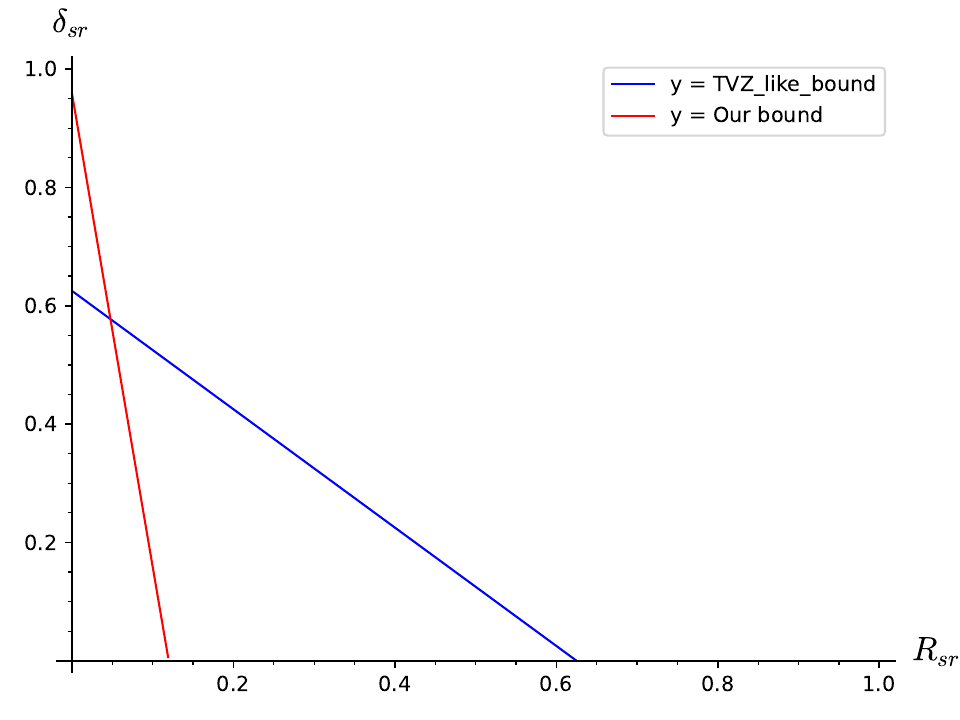}
  \caption{Comparison of Corollary \ref{cor: from_rs_max} with the TVZ-like bound for sum-rank codes of matrix size $2 \times 2$ over $\F_{5^2}$.}
  \label{fig:p25t4m2_tvz}
\end{minipage}
\end{figure}

  


\begin{figure}[htbp]
    \centering
    \includegraphics[width=0.43\linewidth]{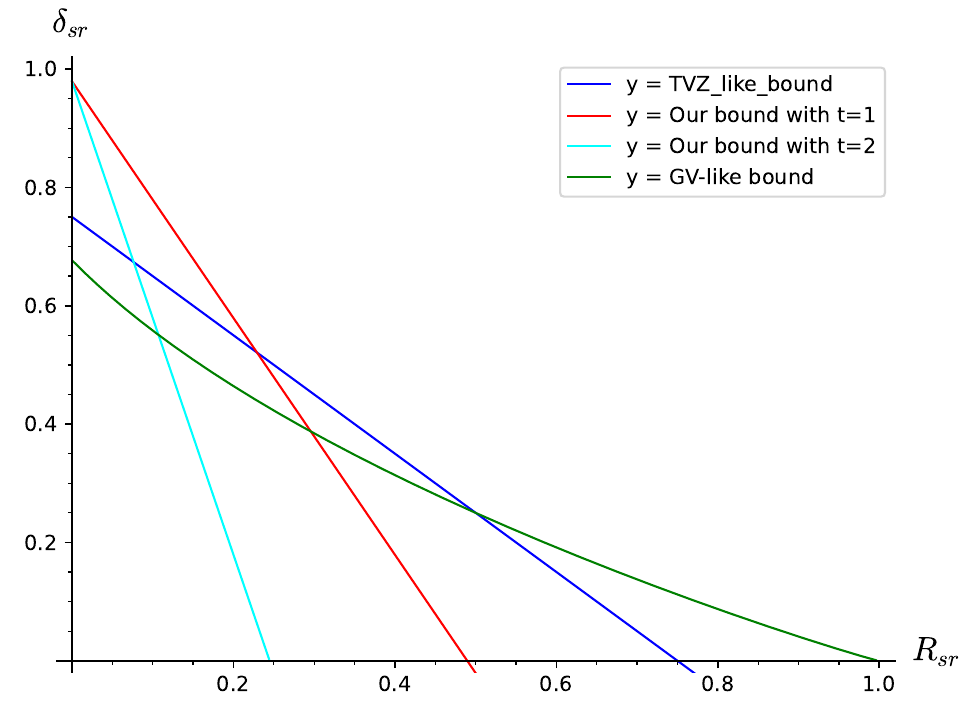}
    \caption{Comparison of the GV-like bound, the TVZ-like bound, and Corollary \ref{cor: from_rs_max} for sum-rank codes of matrix size $2 \times 2$ over $\F_{7^2}$.}
    \label{fig:p49r2m2t2_multiple_bound_maximal}
\end{figure}


\section{Explicit constructions}
\label{sec: construct}
Based on Theorem \ref{thm: CSR}, we have the following explicit construction for linear sum-rank codes.
\begin{corollary}
	\label{cor: explicit_construction}
	Let $q$ be a prime power, and let $m$ and $d$ be integers with $0 < d \leq m$.
	Let ${\bf C}_{1}$ be an $[n, k_{1}, d_{1}]_{q^{k_{2}}}$ linear code,
	where $k_{2} = m(m-d+1)$.
	Let ${\bf C}_{2}$ be a maximum rank metric code in $\F_{q}^{(m,m)}$ with minimum rank distance $d$.
	Then there is a sum-rank code ${\bf C}$ over $\F_{q}$ with block length $n$ and matrix size $m \times m$.
	Its minimum sum-rank distance $d_{sr}$ is $d_{1} \cdot d$, and its dimension is $k_{1} \cdot k_{2}$.
\end{corollary}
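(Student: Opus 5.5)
The plan is to obtain this corollary as a direct specialization of Theorem \ref{thm: CSR}, with one extra argument needed only for the exact value of the distance.

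\textbf{Parameters of the inner code.} Recall that an MRD code ${\bf C}_{2}\subseteq \F_{q}^{(m,m)}$ with minimum rank distance $d$ exists for every $0<d\le m$; for instance, the Gabidulin code defined by $q$-polynomials of $q$-degree at most $m-d$ works. By the Singleton-like bound with equality, it has dimension $m(m-d+1)=k_{2}$ over $\F_{q}$. A rank-metric code is a sum-rank code of block length $1$. So ${\bf C}_{2}$ is a linear sum-rank code of block length $t=1$ and matrix size $m\times m$, with dimension $k_{2}$ and minimum distance $d_{2}=d$.

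\textbf{The concatenated code and the lower bound.} Since $k_{2}=\dim_{\F_q}{\bf C}_{2}$, the outer code ${\bf C}_{1}$ over $\F_{q^{k_{2}}}$ has exactly the alphabet required in Theorem \ref{thm: CSR}. Applying that theorem gives a linear sum-rank code ${\bf C}$ over $\F_{q}$ with block length $n\cdot 1=n$, matrix size $m\times m$, dimension $k_{1}k_{2}$, and $d_{sr}({\bf C})\ge d_{1}d$. The dimension count is immediate, because $\phi\circ\pi$ is an $\F_q$-linear bijection applied coordinatewise. The lower bound holds because each nonzero outer symbol $c_{i}$ is sent to a nonzero codeword of ${\bf C}_{2}$, which has rank at least $d$, and a nonzero outer codeword has at least $d_{1}$ nonzero symbols.

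\textbf{The upper bound $d_{sr}\le d_{1}d$.} This is the step I expect to be the main obstacle. Since $\pi$ is an arbitrary $\F_q$-linear bijection, I would exploit the freedom to choose it. Fix a codeword ${\bf c}\in{\bf C}_{1}$ of weight $d_{1}$ with support $S$, and let $W\subseteq\F_{q^{k_2}}$ be the $\F_q$-span of $\{c_{i}: i\in S\}$. It suffices to choose $\pi$ so that every $c_{i}$ with $i\in S$ maps to a codeword of ${\bf C}_{2}$ of rank exactly $d$; the concatenated image of ${\bf c}$ then has sum-rank weight exactly $d_{1}d$.

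\begin{itemize}
\item \emph{Independent case.} If the $c_{i}$ with $i\in S$ are $\F_q$-linearly independent, pick linearly independent rank-$d$ codewords of ${\bf C}_{2}$ and let $\phi\circ\pi$ send the $c_{i}$ to them. This is possible because the minimum-rank codewords of an MRD code span it. Then extend $\pi$ to a bijection on all of $\F_{q^{k_2}}$.
\item \emph{Dependent case.} In general the $c_{i}$ may be dependent. I would first replace ${\bf c}$ by a suitable scalar multiple $\lambda{\bf c}$ with $\lambda\in\F_{q^{k_{2}}}^{*}$, which preserves weight and support, so as to simplify $W$. I would then seek an $\F_q$-linear embedding of $W$ into ${\bf C}_{2}$ sending the specific vectors $c_{i}$ into the set of rank-$d$ codewords. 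To find one, I would try a counting or averaging argument over all linear injections $W\to{\bf C}_{2}$. Weight distributions of MRD codes are known, and the proportion of rank-$d$ codewords is a positive quantity. For each fixed nonzero vector, a uniformly random injection sends it to a uniformly random nonzero codeword. I would aim to show that the expected number of indices $i\in S$ whose image has rank $>d$ is less than $1$.
\item \emph{Fallback.} If this estimate fails in general, I would instead state the result for a suitably chosen $\pi$, or settle for $d_{sr}\ge d_{1}d$.
\end{itemize}
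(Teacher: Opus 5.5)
Your first two steps are exactly the paper's argument. The corollary is Theorem~\ref{thm: CSR} with $t=1$ and an MRD inner code of dimension $m(m-d+1)$, which gives dimension $k_1k_2$ and $d_{sr}\ge d_1d$. The paper offers nothing beyond this for the word ``is''. The gap is in your third step. With one global bijection $\pi$, which is the only freedom in the construction of Theorem~\ref{thm: CSR}, the equality $d_{sr}=d_1d$ is false in general, so no counting argument can complete the dependent case. Take $q=2$, $m=2$, $d=1$, so $k_2=4$ and ${\bf C}_2=\F_2^{(2,2)}$. Let ${\bf C}_1$ be the $[15,1,15]_{16}$ code generated by $(\alpha_1,\dots,\alpha_{15})$, where $\{\alpha_1,\dots,\alpha_{15}\}=\F_{16}^{*}$. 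The entries of each nonzero codeword $\lambda(\alpha_1,\dots,\alpha_{15})$ run through all of $\F_{16}^{*}$. So for every $\pi$ its image contains each nonzero binary $2\times 2$ matrix exactly once: $9$ of rank $1$ and $6$ of rank $2$. Hence $d_{sr}({\bf C})=21>15=d_1d$, and exactly $6$ positions are ``bad'' whatever $\pi$ is. Your rescaling by $\lambda$ cannot help either. Multiplication by $\lambda$ is itself an $\F_q$-linear bijection of $\F_{q^{k_2}}$, so passing from ${\bf c}$ to $\lambda{\bf c}$ just replaces $\pi$ by $\pi\circ(\lambda\,\cdot)$.

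The missing idea is to use freedom coordinate by coordinate rather than globally. The proof of Theorem~\ref{thm: CSR} only uses that the map in each coordinate is an injective $\F_q$-linear map $\F_{q^{k_2}}\to{\bf C}_2$. So you may use a different bijection $\pi_i$ in each coordinate; equivalently, first rescale the coordinates of ${\bf C}_1$, which preserves $[n,k_1,d_1]$. The dimension $k_1k_2$ and the bound $d_{sr}\ge d_1d$ are unaffected. Now fix ${\bf c}\in{\bf C}_1$ of weight $d_1$ with support $S$, and a codeword $A\in{\bf C}_2$ of rank exactly $d$. Choose the $\pi_i$, $i\in S$, with $\phi(\pi_i(c_i))=A$; this is possible because a linear bijection can send any nonzero vector to any prescribed nonzero vector. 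The image of ${\bf c}$ then has sum-rank weight exactly $d_1d$, which proves the existence statement with no counting at all. If instead ${\bf C}$ must be the concatenation of the given ${\bf C}_1$ with a single $\pi$, then only your fallback $d_{sr}\ge d_1d$ is true. That inequality is all the paper's argument actually establishes.
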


When $q=2$, $n=15$, $m=2$, and $d=1$ in Corollary \ref{cor: explicit_construction},
we have $k_{2}=4$. Let ${\bf C}_{1}$ be an $[n, n-d_{1}+1, d_{1}]_{16}$ Reed-Solomon code,
where $d_{1} \in [4, 15]$. 
Then the minimum sum-rank distance and the dimension of ${\bf C}$ over $\F_{2}$
are given in Table \ref{table1_block_length15}. We compare them with \cite[Table IV]{MP21}.
One can see that the dimensions of the codes obtained by our construction are larger than those given in \cite{MP21}.

When $q=3$, $n=31$, $m=2$, and $d=1$ in Corollary \ref{cor: explicit_construction},
we have $k_{2}=4$. Let ${\bf C}_{1}$ be an $[n, n-d_{1}+1, d_{1}]_{3^4}$ Reed-Solomon code,
where $d_{1} \in [4, 31]$. 
Then the minimum sum-rank distance and the dimension of ${\bf C}$ over $\F_{3}$
are given in Table \ref{table1_block_length31}. We compare them with \cite[Table VII]{Chen}.
The numbers in bold face in the table mean that the dimensions of our constructed codes are larger.

\begin{table}[htbp]
\centering
\caption{Sum-rank codes with block length $15$ and block size $2 \times 2$ over $\F_{2}$.}
\label{table1_block_length15}
\begin{tabular}{|l|l|l|l|l|}

\hline $d_{sr}$ &  dimension  & Table IV \cite{MP21} & Singleton \\
\hline $4$ &  	$2 \cdot 24$  & $2 \cdot 20$ & $2 \cdot 27$ \\
\hline $5$ &   $2 \cdot 22$  & $2 \cdot 18$ & $2 \cdot 26$ \\
\hline $6$ &   $2 \cdot 20$  & $2 \cdot 16$ & $2 \cdot 25$ \\
\hline $7$ &   $2 \cdot 18$ & $2 \cdot 14$ & $2 \cdot 24$ \\
\hline $8$ & 	$2 \cdot 16$  & $2 \cdot 10$ & $2 \cdot 23$ \\
\hline $9$ & 	$2 \cdot 14$  & $2 \cdot 8$ & $2 \cdot 22$ \\
\hline $10$ & 	$2 \cdot 12$ & $2 \cdot 8$ & $2 \cdot 21$ \\
\hline $11$ & 	$2 \cdot 10$  & $2 \cdot 6$ & $2 \cdot 20$ \\
\hline $12$ & 	$2 \cdot 8$  & $2 \cdot 4$ & $2 \cdot 19$ \\
\hline $13$ & 	$2 \cdot 6$  & none & $2 \cdot 18$ \\
\hline $14$ & 	$2 \cdot 4$  & $2 \cdot 2$ & $2 \cdot 17$ \\
\hline $15$ & 	$2 \cdot 2$  & none & $2 \cdot 16$ \\
\hline
\end{tabular}
\end{table}

\begin{table}[htbp]
\centering
\caption{Sum-rank codes with block length $31$ and block size $2 \times 2$ over $\F_{3}$.}
\label{table1_block_length31}
\begin{tabular}{|l|l|l|l|}
\hline $d_{s r}$ &	Dimension	& Table VII \cite{Chen} & Singleton \\
\hline 4 & 	$2 \cdot 56$ &	$2 \cdot 57$ & $2 \cdot 59$ \\
\hline 5 & 	$2 \cdot \bf{54}$ & $2 \cdot 53$ & $2 \cdot 58$ \\
\hline 6 & 	$2 \cdot 52$  &			$2 \cdot 52$ & $2 \cdot 57$ \\
\hline 7 & 	$2 \cdot \bf{50}$ &			$2 \cdot 49$ & $2 \cdot 56$ \\
\hline 8 & 	$2 \cdot \bf{48}$  &			$2 \cdot 47$ & $2 \cdot 55$ \\
\hline 9 & 	$2 \cdot \bf{46}$ &			$2 \cdot 44$ & $2 \cdot 54$ \\
\hline 10 & $2 \cdot \bf{44}$ &			$2 \cdot 42$ & $2 \cdot 53$ \\
\hline 11 & $2 \cdot \bf{42}$ &			$2 \cdot 40$ & $2 \cdot 52$ \\
\hline 12 & $2 \cdot \bf{40}$ &			$2 \cdot 39$ & $2 \cdot 51$ \\
\hline 13 & $2 \cdot \bf{38}$ &			$2 \cdot 36$ & $2 \cdot 50$ \\
\hline 14 & $2 \cdot \bf{36}$ &			$2 \cdot 35$ & $2 \cdot 49$ \\
\hline 15 & $2 \cdot \bf{34}$ &			$2 \cdot 32$ & $2 \cdot 48$ \\
\hline 16 & $2 \cdot \bf{32}$ &			$2 \cdot 31$ & $2 \cdot 47$ \\
\hline 17 & $2 \cdot \bf{30}$ &			$2 \cdot 29$ & $2 \cdot 46$ \\
\hline 18 & $2 \cdot 28$ &			$2 \cdot 28$ & $2 \cdot 45$ \\
\hline 19 & $2 \cdot \bf{26}$ &			$2 \cdot 25$ & $2 \cdot 44$ \\
\hline 20 & $2 \cdot 24$ &			$2 \cdot 24$ & $2 \cdot 43$ \\
\hline 21 & $2 \cdot 22$ &			$2 \cdot 22$ & $2 \cdot 42$ \\
\hline 22 & $2 \cdot 20$ &			$2 \cdot 21$ & $2 \cdot 41$ \\
\hline 23 & $2 \cdot 18$ &			$2 \cdot 20$ & $2 \cdot 40$ \\
\hline 24 & $2 \cdot 16$ &			$2 \cdot 19$ & $2 \cdot 39$ \\
\hline 25 & $2 \cdot 14$ &			$2 \cdot 18$ & $2 \cdot 38$ \\
\hline 26 & $2 \cdot 12$ &			$2 \cdot 17$ & $2 \cdot 37$ \\
\hline 27 & $2 \cdot 10$  &			$2 \cdot 15$ & $2 \cdot 36$ \\
\hline 28 & $2 \cdot 8$ &			$2 \cdot 14$ & $2 \cdot 35$ \\
\hline 29 & $2 \cdot 6$ &			$2 \cdot 13$ & $2 \cdot 34$ \\
\hline 30 & $2 \cdot 4$ &			$2 \cdot 13$ & $2 \cdot 33$ \\
\hline
\end{tabular}
\end{table}

\section{Conclusion}
\label{sec: conclusion}
In this paper, we introduce concatenated sum-rank codes. Then we show that there exists a sequence of concatenated sum-rank codes exceeding the asymptotic
Tsfasman-Vl\u{a}du\c{t}-Zink-like bound and the asymptotic Gilbert-Varshamov-like bound.
In addition, we also give explicit constructions of concatenated sum-rank codes. We show that for some parameter regimes,
the dimensions of our constructed codes are better than those given in \cite{MP21} and \cite{Chen}.

\end{document}